\newtheorem{theorem}{Theorem}
\newtheorem{lemma}{Lemma}
\newtheorem{proposition}{Proposition}
\newtheorem{rem}{Remark}
\title{On the Joint Evolution Problem\\ for a Scalar Field and its Singularity}
\author[1]{Aditya Agashe\thanks{aditya\_agashe@brown.edu}}
\author[2]{Ethan Lee\thanks{esl75@scarletmail.rutgers.edu}}
\author[2]{Shadi Tahvildar-Zadeh\thanks{shadit@math.rutgers.edu}}
\affil[1]{Brown University}
\affil[2]{Rutgers University}
\date{January 19, 2023}
\begin{document}

\maketitle

\begin{abstract}
In the classical electrodynamics of point charges in vacuum, the electromagnetic field, and therefore the Lorentz force, is ill-defined at the locations of the charges. Kiessling resolved this problem by using the momentum balance between the field and the particles, extracting an equation for the force that is well-defined where the charges are located, so long as the field momentum density is locally integrable in a neighborhood of the charges. 

In this paper, we examine the effects of such a force by analyzing a simplified model in one space dimension. We study the joint evolution of a massless scalar field together with its singularity, which we identify with the trajectory of a particle. The static solution arises in the presence of no incoming radiation, in which case the particle remains at rest forever. We will prove the stability of the static solution for particles with positive bare mass by showing that a pulse of incoming radiation that is compactly supported away from the point charge will result in the particle eventually coming back to rest. We will also prove the  nonlinear instability of the static solution for particles with negative bare mass by showing that an incoming radiation with arbitrarily small amplitude will cause the particle to reach the speed of light in finite time. We conclude by discussing modifications to this simple model that could make it more realistic.
\end{abstract}

\section{Introduction and Main Result}

Classical electromagnetism has a fundamental problem: For a charged point-particle in an electromagnetic field that is at least partly sourced by that particle, the field is not defined at the location of the particle. Because the Lorentz force that the field exerts on the particle depends on the values of the field at the particle's location, the force is also undefined where it's needed, i.e. for the particle's equations of motion to make sense. This is the infamous {\em radiation-reaction problem}. This problem has been the subject of intense study by some of the world's most renowned physicists and mathematicians, including Poincar\'e \cite{Poi1906} and Dirac \cite{Dir38}, for more than a century. An excellent account of this endeavor can be found in \cite{SpohnBook}, where an entire chapter is devoted to recounting its history\footnote{It is outside the scope of this article to mention all the various directions taken by researchers to resolve this issue.  Interested readers are referred to \cite{SpohnBook} and its copious bibliography.}. 

An important breakthrough came in 2019 when, following up on some ideas of Poincar\'e \cite{Poi1906}, Kiessling \cite{kiessling,kiessling-err} showed that if one postulated local conservation laws to hold for the total (field + particle) energy density-momentum density-stress tensor, i.e. (employing the Einstein summation convention, where repeated indices are summed over their range)
\begin{equation}\label{eq:Tconserv}
    \partial^\mu T^{\mbox{\tiny total}}_{\mu\nu} = 0,\quad \mbox{where}\quad T_{\mu\nu}^{\mbox{\tiny total}} = T_{\mu\nu}^{\mbox{\tiny field}} + T_{\mu\nu}^{\mbox{\tiny particle}},
\end{equation}
(once these expressions are properly defined) then the force can be determined using the law of momentum balance, provided the field momentum density is locally integrable in a neighborhood of the charge. This integrability assumption rules out the classical electromagnetic vacuum law $E=D, B=H$ postulated by Maxwell, but admits others, such as the Bopp-Land\'e-Thomas-Podolsky (BLTP for short) vacuum law \cite{Bop40,Bop43,Lan41,LT41,Pod42}.

It is of interest to study the effect of the Kiessling force on the motion of an electromagnetic point-charge. In three space dimensions using the standard electromagnetic vacuum laws, this is not possible. One can use other vacuum laws in three space dimensions, such as BLTP, that do meet Kiessling's criterion, and for which one can prove local well-posedness of the joint field and particle dynamics \cite{kiessling,KieTah23} as well as global existence for the solution to the scattering problem of a single particle by a smooth potential \cite{hoang}. However, the expression for the force in the BLTP case is quite complicated, which makes it hard to figure out what the particle trajectories actually look like.  On the other hand, by a simple scaling analysis, it is easy to see that Kiessling's criterion may be satisfied for Maxwell's vacuum law, so long as one works in {\em one} space dimension.    Since there is however no viable electromagnetism in one space dimension, we instead turn to the simpler model of a {\em scalar} charge.  Such a model has been proposed before by many authors, see e.g. \cite{EKR09} and references therein. To simplify matters even further, we will be focusing on the case of a single particle perturbed by scalar radiation. Following Weyl's ideas on singularity theories of matter \cite{Weyl}, we will take the evolving singularity of the scalar field to represent the path of the particle in space-time. This is thus a joint evolution problem for a scalar field $u(t,s)$, and the trajectory of its singularity, $s = q(t)$. The governing equations are as follows (see \cite{EKR09}, Eqs. 7--9): The field satisfies

\begin{equation}\label{eq:wave}
    \begin{cases} 
      \partial_{t}^2 u - \partial_{s}^2 u = a\delta(s - q(t)) \\
      u(0, s) = -\frac{a}{2}\lvert s\rvert + V_0(s)\\
      \partial_{t}u(0, s) = V_1(s) \\
    \end{cases}
\end{equation}
($\delta$ is the Dirac delta-function) while the equations of motion for the particle are
\begin{equation}\label{eq:eom}
   \begin{cases} 
      \dot{q} = \frac{p}{m\sqrt{1 + \frac{p^2}{m^2}}} \\
      \dot{p} = f(t,q,\dot{q}) \\
      q(0) = 0 \\
      \dot{q}(0) = 0
   \end{cases}
\end{equation}
(The speed of light has been set equal to one.)

Here, (\ref{eq:wave}) is the Cauchy problem for a massless wave equation sourced by the particle. We have added $V_{0}(s)$ and $V_{1}(s)$ to the initial data to represent smooth incoming radiation that is compactly supported away from the point charge. Thus $V_{0}, V_{1} \in C^{\infty}_{c}(\mathbb{R} \setminus \{0\})$. Real constants $a$ and $m$ represent the charge and the (bare) rest mass of the particle. Equations (\ref{eq:eom}) are simply Newton's equations of motion with Einstein's special-relativistic relation between momentum and velocity instead of Newton's. $f$ is the force exerted by the field on the particle, and its precise expression needs to be determined using another principle.  Here, following Kiessling, we will use {\em momentum conservation} to determine $f$.
\begin{rem}
We note that the above system of equations is not fully Lorentz-covariant: the right-hand side of the wave equation in \eqref{eq:wave} does {\em not} transform correctly under a Lorentz transformation.  This is a defect of the model (which was also pointed out in \cite{EKR09}).  This defect can be corrected, but the resulting system becomes harder to analyze.  Some of the results in this paper have also been obtained for the fully relativistic version, and will appear elsewhere \cite{FroLeiTah}.
\end{rem}
The initial conditions in (\ref{eq:eom}) can always be satisfied by going into the initial rest frame of the particle. We will use Kiessling's prescription to determine the force $f$ on the particle. This will depend on the field $u$, which makes (\ref{eq:wave}-\ref{eq:eom}) a coupled system of equations for the joint evolution of the field and its singularity.

Consider first the case of no incoming radiation, i.e. $V_0 \equiv 0$ and $V_1 \equiv 0$. In that case, $u = - \frac{a}{2}|s|$ where $q(t) = 0$ for all $t$ is clearly a time-independent solution to (\ref{eq:wave}), i.e. the particle remains at rest forever. We shall see that in this case, $f \equiv 0$. In this paper we will prove:

%\newpage{}

\begin{theorem}
\hspace{0cm}
\begin{enumerate}[label = (\alph*)]
    \item Suppose $m > 0$. For all smooth initial data $(V_0, V_1)$ for (\ref{eq:wave})  that are compactly supported away from the origin, there exists a solution $(u(t,s),q(t))$ to the joint field-particle evolution problem (\ref{eq:wave}-\ref{eq:eom}), with the property that:
    \begin{enumerate}[label = (\roman*)]
        \item the field $u$ is Lipschitz everywhere and the particle trajectory $q$ is $C^1$,
        \item $u$ is at least $C^1$ away from the particle path $s=q(t)$, and 
        \item for all $\epsilon > 0$, $\exists T>0$ such that $t>T$ implies $|\dot{q}(t)| < \epsilon$.
    \end{enumerate}
    \item Suppose $m < 0$. For all $\epsilon>0$, there exists smooth, compactly supported initial data $(V_0, V_1)$ for (\ref{eq:wave}), with $\|V_0\|_{C^1(\mathbb{R})} + \|V_1\|_{C^0(\mathbb{R})} < \epsilon$, and a solution $(u(t,s),q(t))$ to the joint field-particle evolution problem (\ref{eq:wave}-\ref{eq:eom}), with the property that:
    \begin{enumerate}[label = (\roman*)]
        \item the field $u$ is Lipschitz everywhere and the particle trajectory $q$ is $C^1$,
        \item $u$ is at least $C^1$ away from the particle path $s=q(t)$, and 
        \item the particle reaches the speed of light in finite time, i.e. $\exists T>0$ s.t. $|\dot{q}(T)| = 1$.
    \end{enumerate}
\end{enumerate}
\end{theorem}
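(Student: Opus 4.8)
The plan is to collapse the coupled field--particle system into a single ordinary differential equation for the trajectory, and then read off both asymptotic regimes from the sign of the self-force. The first and most important step is to make Kiessling's momentum-balance force explicit. In one space dimension the field off the worldline solves the homogeneous wave equation, so near $s=q(t)$ I would decompose it into left- and right-moving characteristic data. The two \emph{incoming} characteristics (the left-mover approaching from $s>q$ and the right-mover approaching from $s<q$) trace straight back to the initial line $t=0$ without recrossing the worldline, because $q(t)+t$ and $q(t)-t$ are strictly monotone while $|\dot q|<1$; hence the incoming data at the particle are \emph{instantaneous} functions of the initial data evaluated at $q(t)\pm t$, with no dependence on the particle's past. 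This absence of memory is the decisive one-dimensional simplification that turns the problem into an ODE. The one genuinely relativistic input is that the kink strength is enhanced: requiring the singular part of $\square u$ for a moving kink $u\sim-\tfrac{c}{2}|s-q(t)|$ to equal $a\,\delta(s-q)$ forces $c(1-\dot q^2)=a$, i.e. the jump of $\partial_s u$ is $-a/(1-\dot q^2)$. Feeding the incoming data and these jump relations into $f=-\big[T^{11}\big]_{q^-}^{q^+}$ produces a force of the form
\[
f = -\frac{a^2}{2}\,\frac{\dot q\,(1+\dot q^2)}{(1-\dot q^2)^2} \;+\; \frac{a}{2}\left(\frac{W_R(q+t)}{1+\dot q} + \frac{W_L(q-t)}{1-\dot q}\right),\qquad W_R=V_0'+V_1,\ \ W_L=V_0'-V_1,
\]
so that the evolution is the autonomous-in-structure ODE $\tfrac{d}{dt}\big(m\dot q/\sqrt{1-\dot q^2}\big)=f$ with $q(0)=\dot q(0)=0$. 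I expect the careful, distributional justification of this formula --- including the local integrability of $T^{01}$ and the \emph{a posteriori} verification that the reconstructed field really has the one-sided $C^1$ regularity assumed in deriving $f$ --- to be the main obstacle; once it is in hand the remaining analysis is comparatively routine.

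For the regularity claims (i)--(ii) common to both parts, the right-hand side above is smooth in $(q,\dot q)$ on $\{|\dot q|<1\}$ and smooth in $t$ (as $V_0,V_1\in C^\infty_c$), so Picard--Lindel\"of yields a unique solution $q$ that is $C^1$ (indeed $C^\infty$) as long as $|\dot q|<1$. Reconstructing $u$ from this $q$ by d'Alembert's formula with the moving source gives a field that is continuous with bounded derivatives away from the worldline, hence Lipschitz on $\mathbb{R}^2$ and $C^1$ off $s=q(t)$, which also closes the self-consistency loop for the force.

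For part (a), $m>0$: the self-force term is dissipative. I would use the relativistic energy $E=m/\sqrt{1-\dot q^2}$, for which $\dot E=\dot q\,f$; the self-contribution is $-\tfrac{a^2}{2}\,\dot q^2(1+\dot q^2)/(1-\dot q^2)^2\le 0$ and behaves like $-(1-\dot q^2)^{-2}$ as $|\dot q|\to1$, which dominates the radiation contribution growing only like $(1-\dot q^2)^{-1}$. This yields an a priori bound $E\le E_{\max}$, hence $|\dot q|\le v_{\max}<1$ for all time, so the solution is global. Because $v_{\max}<1$, the characteristic variables $q\pm t$ run off to $\pm\infty$ at a uniform rate and leave the compact supports of $W_R,W_L$ after some finite $T_0$; for $t>T_0$ the forcing vanishes and $\dot E=-\tfrac{a^2}{2}\dot q^2(1+\dot q^2)/(1-\dot q^2)^2<0$ whenever $\dot q\neq0$, while $\dot q$ keeps its sign, so $\dot q\to0$ monotonically, giving (iii).

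For part (b), $m<0$: the identical self-force term now has the opposite sign in $\dot v=\tfrac{f}{m}(1-\dot q^2)^{3/2}$, i.e. it is anti-damping, $\dot v\sim\tfrac{a^2}{2|m|}\,\dot q/\sqrt{1-\dot q^2}$ for $\dot q$ near $\pm1$ and $\dot v\approx\tfrac{a^2}{2|m|}\dot q$ near $\dot q=0$. The rest state $\dot q=0$ is therefore linearly unstable. I would choose an arbitrarily small one-signed pulse (for instance $V_0\equiv0$ and $V_1$ supported on the positive half-line with $\int V_1\neq0$, so that $\|V_0\|_{C^1}+\|V_1\|_{C^0}<\epsilon$), giving a nonzero net impulse $\dot q(T_0)=v_1\neq0$ by the time the pulse has passed. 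Thereafter the forcing is gone, $|\dot q|$ increases monotonically, and the blow-up time
\[
T_*=\frac{2|m|}{a^2}\int_{|v_1|}^{1}\frac{\sqrt{1-v^2}}{v\,(1+v^2)}\,dv
\]
is finite because the integrand is bounded near $v=1$; hence $|\dot q(T_*)|=1$ in finite time, which is (iii). The only delicate analytic point here is guaranteeing $v_1\neq0$ from small data, which follows from computing the leading impulse along the worldline while $\dot q\approx0$.
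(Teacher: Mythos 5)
Your overall strategy---use the one-dimensional no-memory structure of the characteristics to reduce the coupled system to an ODE for $q$, identify the self-force as damping for $m>0$ and anti-damping for $m<0$, and then run an a priori bound in one case and a finite-time integral in the other---is exactly the paper's strategy, and your jump relation $[\partial_s u]_{s=q}=-a/(1-\dot q^2)$ is correct. However, there is a genuine error in the step you yourself flag as the crux: the force formula. You take $f=-\bigl[T^{11}\bigr]_{q^-}^{q^+}=-[\tau]$, which is the momentum flux through a \emph{static} boundary. Since the worldtube over which momentum balance is integrated moves with the particle, Green's theorem applied to $\partial_t\pi-\partial_s\tau=f\,\delta(s-q(t))$ over $\{q(t)-\varepsilon\le s\le q(t)+\varepsilon\}$ produces the convective flux $\dot q\,\pi+\tau$ on each lateral side, so the correct expression is
\begin{equation*}
f=-\dot q\,[\pi]_{s=q(t)}-[\tau]_{s=q(t)},
\end{equation*}
not $-[\tau]$ alone. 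Carrying this out with the one-sided values $w_s|_{q^\mp}=\tfrac{a}{2}\tfrac{1}{\dot q\pm1}$, $w_t|_{q^\mp}=-\tfrac{a}{2}\tfrac{\dot q}{\dot q\pm1}$ gives two consequences you miss. First, the self-force is $-\tfrac{a^2}{2}\tfrac{\dot q}{1-\dot q^2}$, not $-\tfrac{a^2}{2}\tfrac{\dot q(1+\dot q^2)}{(1-\dot q^2)^2}$ (your expression is precisely $-[\tau]$ with the $-\dot q[\pi]$ piece dropped). Second, there is a cancellation $\dot q[w_s]+[w_t]=0$ which makes the radiation term exactly $aV_s(t,q)=\tfrac{a}{2}\bigl(F(q+t)+G(q-t)\bigr)$ with $F=V_0'+V_1$, $G=V_0'-V_1$, i.e.\ with \emph{no} velocity-dependent weights $\tfrac{1}{1\pm\dot q}$.

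Because the erroneous self-force is still odd in $\dot q$, negative for $\dot q>0$, and divergent as $|\dot q|\to1$, your qualitative conclusions in (a) and (b) happen to survive the correction, but the proof as written derives and analyzes the wrong equation of motion: the $(1-\dot q^2)^{-2}$ versus $(1-\dot q^2)^{-1}$ dominance argument, the velocity weights on the forcing, and the explicit blow-up integral $T_*$ are all artifacts of the missing term. (With the correct force the post-pulse velocity equation is $\dot v=-\tfrac{a^2}{2m}v\sqrt{1-v^2}$, which still gives decay for $m>0$ and finite-time arrival at $|v|=1$ for $m<0$.) One further point worth tightening in part (b): the net velocity imparted by the pulse is not controlled by $\int V_1$ alone but by a weighted integral $\int G(x)\,e^{\lambda x}\,dx$ coming from the exponential anti-damping during the passage of the pulse (the paper isolates this as a separate lemma by differentiating the solution with respect to the pulse amplitude at amplitude zero); your choice of a one-signed pulse does make this weighted integral nonzero, but that is the fact that needs to be stated and used, not the unweighted impulse.
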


Outline of the proof:  In Section 2 we solve the field equations (\ref{eq:wave}) assuming the trajectory $s=q(t)$ of the singularity is given.  We do this by decomposing the field into a smooth part depending only on the incoming radiation, and a singular part sourced by the particle.  In Section 3 we use this field to compute the Kiessling force $f$ in (\ref{eq:eom}), and show that it depends only on the smooth part of the field. We can thus eliminate the field from (\ref{eq:eom}) and have $q(t)$ be the only unknown.  In Section 4 we study (\ref{eq:eom}) by turning it into a dynamical system in the plane and analyzing its phase portrait, which will allow us to prove the stability claim in Section 5 and the instability claim in Section 6. 

We conclude in Section 7 by speculating on the mechanism for instability, and propose various modifications to our model that could perhaps avoid such instabilities.

\section{Solving the field equations}

\begin{proposition}
For any given trajectory $q(t)$ with $|\dot{q}|< 1$, $q(0) = 0$, and $\dot{q}(0) = 0$, the following initial value problem

\begin{equation}
   \begin{cases} 
   \partial_{t}^2 u - \partial_{s}^2 u = a\delta(s - q(t)) \\
   u(0, s) = -\frac{a}{2}\lvert s\rvert + V_0(s)\\
   \partial_{t}u(0, s) = V_1(s) \\
   \end{cases}
\end{equation}
has the unique solution:

\begin{equation}\label{eq:uSol}
    u(t,s) = \frac{a}{2}\begin{cases} 
      s + V(t,s) & s < -t \\
      T_{+}(s + t) - t + V(t,s) & -t < s < q(t) \\
      T_{-}(s - t) - t + V(t,s) & q(t) < s < t \\
      -s + V(t,s) & s > t
    \end{cases}
\end{equation}
where the functions $T_\pm$ are defined by
\begin{equation}
    T_\pm(q(x)\pm x) = x
\end{equation}
for all $x$, i.e. $T_\pm$ is the inverse function to $q(x) \pm x$ (which exists and is $C^1$ so long as $|\dot{q}| < 1$), and 
\begin{equation}
V(t,s) = \frac{1}{2}(V_{0}(s-t) + V_{0}(s + t)) + \frac{1}{2}\int_{s-t}^{s+t} V_{1}(y) dy.
\end{equation}
Furthermore, $u(t,s)$ is at least $C^1$ away from the path $s = q(t)$.
\end{proposition}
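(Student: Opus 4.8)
The plan is to exploit the linearity of the wave equation and the explicit d'Alembert representation available in one space dimension. First I would split the field as $u = w + V$, where $V$ absorbs the incoming radiation and $w$ carries the particle's singularity. Since $V_0,V_1$ are smooth, the function $V(t,s)$ in the statement is exactly d'Alembert's solution of the homogeneous equation $\partial_t^2 V - \partial_s^2 V = 0$ with Cauchy data $(V_0,V_1)$; one checks directly that $V(0,s)=V_0(s)$, $\partial_t V(0,s)=V_1(s)$, and that $V$ is smooth on all of $\mathbb{R}^{1+1}$. This reduces the problem to solving $\partial_t^2 w - \partial_s^2 w = a\delta(s-q(t))$ with data $\big(-\frac a2|s|,\,0\big)$, whose solution is the piecewise part of \eqref{eq:uSol}.

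Next I would verify the candidate for $w$ directly. The lines $s=\pm t$ and the worldline $s=q(t)$ partition the region $t>0$ into four pieces; because $q(0)=0$ and $|\dot q|<1$ we have $|q(t)|<t$ for $t>0$, so the worldline stays strictly inside the forward light cone and the four pieces are precisely those of \eqref{eq:uSol}. In each open region the candidate is built from $T_\pm(s\pm t)$ and terms linear in $t,s$, hence solves the free wave equation there; note that $T_\pm$ are $C^1$ exactly because $\dot q\pm 1\neq 0$ when $|\dot q|<1$. I would then check continuity across all three interfaces: across $s=\mp t$ using $T_+(0)=T_-(0)=0$ (from $q(0)=0$), and across $s=q(t)$ using $T_+(q(t)+t)=T_-(q(t)-t)=t$. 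All adjacent pieces agree, so $w$ is globally continuous, and evaluating at $t=0$ recovers $w(0,s)=-\frac a2|s|$ and $\partial_t w(0,s)=0$.

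The heart of the matter is showing the candidate reproduces the moving source distributionally, i.e. verifying the jump condition across $\Gamma=\{s=q(t)\}$; I expect this to be the main obstacle, since it is where the careful distributional bookkeeping lives and where the precise form of $T_\pm$ is forced. Let $[\cdot]$ denote the jump across $\Gamma$ (value for $s>q(t)$ minus value for $s<q(t)$). Since $w$ is continuous across $\Gamma$, differentiating $t\mapsto w(t,q(t))$ gives the tangential relation $[\partial_t w]=-\dot q\,[\partial_s w]$. Writing $\rho(t,s)=s-q(t)$ and computing the distributional second derivatives, the part of $\partial_t^2 w-\partial_s^2 w$ supported on $\Gamma$ equals $\big(-\dot q\,[\partial_t w]-[\partial_s w]\big)\delta(s-q(t)) = (\dot q^2-1)[\partial_s w]\,\delta(s-q(t))$, while the regular part vanishes in the open regions. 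Differentiating the defining relations $T_\pm(q(x)\pm x)=x$ gives $T_+'(q(t)+t)=(1+\dot q)^{-1}$ and $T_-'(q(t)-t)=(\dot q-1)^{-1}$, so $[\partial_s w]=\frac a2\big((\dot q-1)^{-1}-(1+\dot q)^{-1}\big)=a/(\dot q^2-1)$; hence the delta carries coefficient exactly $a$, as required.

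Finally, for regularity and uniqueness: away from $\Gamma$ the field is $C^1$, since across $s=\pm t$ the first derivatives match because $\dot q(0)=0$ forces $T_\pm'(0)=1$, and $V$ is smooth; thus $u=w+V$ is $C^1$ off the worldline and Lipschitz everywhere. Uniqueness follows from finite speed of propagation for the one-dimensional wave equation: the difference of two solutions with the same data and the same source is a weak solution of the homogeneous equation with zero Cauchy data, which an energy estimate on backward light cones forces to vanish. Alternatively, the characteristic representation itself yields uniqueness constructively, since the value in each region is completely determined by the initial data, the continuity requirement, and the jump condition just verified.
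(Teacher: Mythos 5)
Your proposal is correct, but it reaches the formula by a genuinely different route than the paper. The paper is constructive: it splits $u=\Psi+U+V$, solves for the sourced piece $\Psi$ via Duhamel's principle (reducing to $W(t,s,\tau)=\tfrac{a}{2}\chi_{[s-t,s+t]}(q(\tau))$ and integrating), and derives $T_\pm$ as retarded times, i.e.\ the unique parameter at which the worldline meets the backward light cone of $(t,s)$; the $C^1$ matching across $s=\pm t$ is then deferred to a separate proposition. You instead posit the ansatz and verify it distributionally: you check the free equation in each open region, continuity across the three interfaces, and --- the key step --- that the singular part of $\partial_t^2 w-\partial_s^2 w$ supported on the worldline is $\bigl(-\dot q\,[\partial_t w]-[\partial_s w]\bigr)\delta(s-q(t))=(\dot q^2-1)[\partial_s w]\,\delta(s-q(t))$ with $[\partial_s w]=a/(\dot q^2-1)$, so the delta carries coefficient exactly $a$. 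I checked this jump computation and it is right (and, pleasingly, the traces $T_\pm'=(\dot q\pm1)^{-1}$ you compute are exactly the ones the paper reuses later for the Kiessling force; note also the formal resemblance of your singular coefficient $-\dot q[\partial_t w]-[\partial_s w]$ to the force $-\dot q[\pi]-[\tau]$). What each approach buys: the paper's Duhamel derivation explains where $T_\pm$ comes from and automatically produces a solution rather than guessing one, while your verification is more self-contained as a proof of the stated proposition, makes the distributional sense of ``solution'' explicit, and actually supplies a uniqueness argument (energy estimate on backward light cones for the difference of two solutions), which the paper asserts but never proves. Both arguments rely on the same structural facts: $|q(t)|<t$ inside the forward cone, invertibility of $x\mapsto q(x)\pm x$ from $|\dot q|<1$, and $\dot q(0)=0$ forcing $T_\pm'(0)=1$ for the $C^1$ matching on the light cone of the origin.
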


\begin{proof}
Define $\Psi(t,s)$ and $\Phi(t,s)$ as solving the following equations: 
\begin{equation}
   \begin{cases} 
      \partial_{t}^2 \Phi - \partial_{s}^2 \Phi = 0\\
      \Phi(0, s) = -\frac{a}{2}\lvert s\rvert + V_0(s)\\
      \partial_{t}\Phi(0, s) = V_1(s) \\
   \end{cases}
%\end{equation}
\qquad
%\begin{equation}
   \begin{cases} 
      \partial_{t}^2 \Psi - \partial_{s}^2 \Psi = a\delta(s - q(t))\\
      \Psi(0, s) = 0\\
      \partial_{t}\Psi(0, s) = 0. \\
   \end{cases}
\end{equation}
Furthermore, define $V(t,s)$ and $U(t,s)$ as solving the following equations:
\begin{equation}
   \begin{cases} 
      \partial_{t}^2 V - \partial_{s}^2 V = 0\\
      V(0, s) = V_0(s)\\
      \partial_{t}V(0, s) = V_1(s) \\
   \end{cases}
%\end{equation}
\qquad
%\begin{equation}
   \begin{cases} 
      \partial_{t}^2 U - \partial_{s}^2 U = 0\\
      U(0, s) = -\frac{a}{2}\lvert s\rvert \\
      \partial_{t}U(0, s) = 0. \\
   \end{cases}
\end{equation}

We thus have that $\Phi = U + V$ and $u = \Psi + \Phi$. Note that because $V_{0}$ and $V_{1}$ are smooth functions, $V(t,s)$ is smooth as well. Hence, $V(t,s)$ contains no singularities.

We can solve for $V$ and $U$ (and hence $\Phi$) by using d'Alembert's formula. We then have the following:

\begin{equation}\label{eq:V}
V(t,s) = \frac{1}{2}(V_{0}(s-t) + V_{0}(s + t)) + \frac{1}{2}\int_{s-t}^{s+t} V_{1}(y) dy,
\end{equation}

\begin{equation}\label{eq:U}
U(t,s) = -\frac{a}{4}(\lvert s-t\rvert + \lvert s + t\rvert) =      \begin{cases} 
      \frac{a}{2}s & s \leq -t \\
      -\frac{a}{2}t & -t < s < t \\
      -\frac{a}{2}s & s \geq t.
    \end{cases}
\end{equation}

We can solve for $\Psi$ with Duhamel's Principle. Define $W(t,s,\tau)$ as follows:
\begin{equation}
\Psi(t,s) = \int_{0}^{t} W(t - \tau, s, \tau) d\tau. 
\end{equation}
It follows that:
\begin{equation}
    \begin{cases} 
        W_{tt} - W_{ss} = 0\\
        W(0,s,\tau) = 0\\
        W_{t}(0,s,\tau) = a\delta(s-q(\tau)).\\
    \end{cases}
\end{equation}

To solve for W, we apply d'Alembert's formula. We have:

\begin{equation}
W(t,s,\tau) = \frac{1}{2}\int_{s-t}^{s+t} a\delta(y-q(\tau)) dy = \frac{a}{2}\chi_{[s-t, s+t]}(q(\tau))
\end{equation}
where $\chi$ is the characteristic function, i.e.:

\begin{equation}
\chi_{[a,b]}(x) = \begin{cases}
1 & a \leq x \leq b\\
0 & \mbox{otherwise}.
\end{cases}
\end{equation}

To integrate $W$ to get $\Psi(t,s)$, we consider the backward light cone of the event $(t,s)$. The $\tau$ for which $(\tau, q(\tau))$ is in this light cone will contribute $\frac{a}{2} d\tau$ to the integral. See Fig.~\ref{fig1}.

Because $q(0) = 0$ and $c = 1$, $q(t)$ is inside the forward light cone drawn from $(0,0)$. As a result of this, $\Psi(t,s) = 0$ when $s > t$ and $s < -t$. Inside the forward light cone of the origin, it is certainly true that the backward light cone of the event $(t,s)$ will intersect with $s = q(t)$. Moreover, it intersects exactly once (going from time 0 to $t$, once $q(t)$ leaves the backward light cone of the event $(t,s)$, it cannot re-enter due to the fact that $c = 1$). We must determine the point at which it intersects, the so-called {\em retarded time}. To the left of $q(t)$, this retarded time $\tau_{2}$ is the solution to $q(\tau_{2}) + \tau_{2} = s + t$, or $T_{+}(s + t)$ for short. The solution is hence $\frac{a}{2}T_{+}(s + t)$. To the right of $q(t)$, this retarded time $\tau_{1}$ is the solution to $q(\tau_{1}) - \tau_{1} = s - t$, or $T_{-}(s - t)$ for short. The solution is hence $\frac{a}{2}T_{-}(s - t)$. 

\begin{figure}[htbp]
\centerline{\includegraphics[scale=.15]{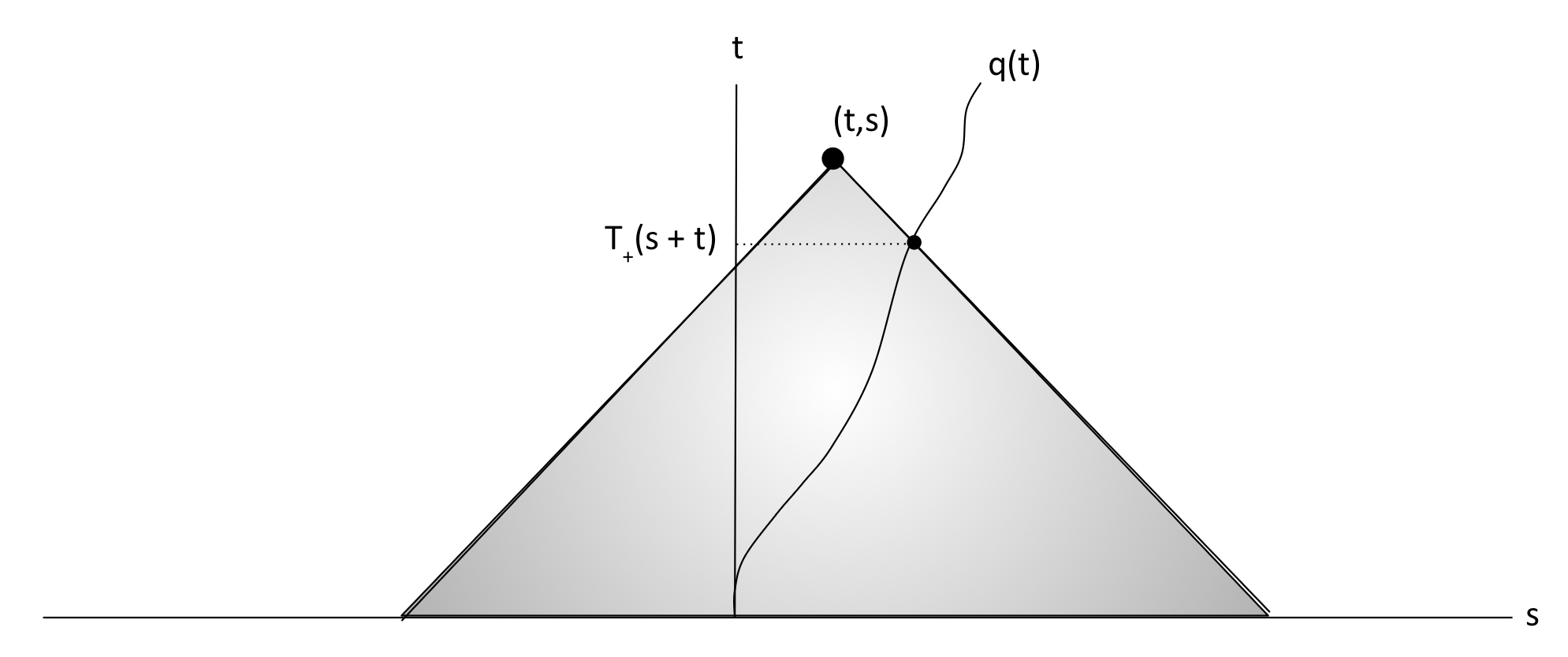}}
\caption{Retarded time $T_{+}$}
\label{fig1}
\end{figure}

%\newpage

We then get the following expression for $\Psi$:

\begin{equation}\label{eq:Psi}
    \Psi(t,s) = \frac{a}{2}\begin{cases} 
      0 & s < -t \\
      T_{+}(s + t) & -t < s < q(t) \\
      T_{-}(s - t) & q(t) < s < t \\
      0 & s > t.
    \end{cases}
\end{equation}
We see from here that $\Psi$, like $U$, is $C^0$ but not $C^1$ because it has two singularities at $s + t = 0$ and $s - t = 0$, i.e. along the light cone of the origin. We will see in Proposition 2 that when we add $\Psi$ and $U$, the singularities along the light cone cancel each other out.

The full solution $u(t,s)$ is thus as follows:

\begin{equation}\label{eq:u}
    u(t,s) = \frac{a}{2}\begin{cases} 
      s + V(t,s) & s < -t \\
      T_{+}(s + t) - t + V(t,s) & -t < s < q(t) \\
      T_{-}(s - t) - t + V(t,s) & q(t) < s < t \\
      -s + V(t,s) & s > t
    \end{cases}
\end{equation}
where $V$ is as defined in Equation (\ref{eq:V}).
\end{proof}

We expect $u(t,s)$ to have singularities only at $s = q(t)$. We pause for a moment to show that $u$ has no singularities along the lines $s = -t$ and $s = t$. 

\hspace{0cm}

\begin{proposition}
For $u(t,s)$ given by Equation (\ref{eq:uSol}), $u(t,s)$ is $C^{1}$ across $s = -t$ and $s = t$. 

\end{proposition}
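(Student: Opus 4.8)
The plan is to strip away the smooth contribution and then verify that the two one-sided pieces meeting along each light-cone ray agree together with their first partial derivatives. Because $V_0,V_1\in C^\infty_c$, d'Alembert's formula \eqref{eq:V} shows that $V(t,s)$ is smooth on all of spacetime, so it contributes no singularity anywhere; it therefore suffices to prove the claim for the reduced field $g:=\tfrac{2}{a}u-V$, whose four branches are $s$, $T_+(s+t)-t$, $T_-(s-t)-t$, and $-s$. A piecewise-defined function whose pieces are each $C^1$ up to a dividing ray is globally $C^1$ across that ray precisely when the two branches, and all of their first partials, extend continuously and agree along the ray; I will check exactly this on $s=-t$ (separating branches one and two) and on $s=t$ (separating branches three and four).

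The computation hinges on the values of $T_\pm$ and $T_\pm'$ at the origin of their arguments. From the defining identities $T_\pm(q(x)\pm x)=x$ together with $q(0)=0$ I get $T_\pm(0)=0$; differentiating gives $T_\pm'(q(x)\pm x)=1/(\dot q(x)\pm 1)$, and evaluating at $x=0$ with $\dot q(0)=0$ yields the crucial values
\begin{equation}
T_+'(0)=1,\qquad T_-'(0)=-1 .
\end{equation}
These are exactly the slopes needed for cancellation. Along $s=-t$ one has $s+t=0$, so branch two takes the value $T_+(0)-t=-t$, matching branch one's value $s=-t$, while $\partial_s[T_+(s+t)-t]=T_+'(0)=1$ and $\partial_t[T_+(s+t)-t]=T_+'(0)-1=0$ match $\partial_s s=1$ and $\partial_t s=0$. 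The identical bookkeeping along $s=t$, where $s-t=0$ and $T_-'(0)=-1$, makes branch three agree with branch four ($-s$) in value and in both partials. Restoring the factor $\tfrac a2$ and adding the smooth $\tfrac a2 V$ then transfers the $C^1$ matching from $g$ to $u$ itself across both rays.

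I expect the only real subtlety to be justifying that these one-sided partials are genuinely the boundary limits of $C^1$ functions, i.e. that $T_\pm$ are $C^1$ in a neighborhood of $0$ with $T_\pm'$ continuous there. This is guaranteed by the inverse function theorem applied to $x\mapsto q(x)\pm x$, whose derivative $\dot q\pm 1$ is nonzero and of one sign whenever $|\dot q|<1$; hence $T_\pm\in C^1$ and the one-sided limits above are the correct values to use in the gluing lemma. With that in hand the matching is a finite verification rather than a genuine obstacle, and the conceptual heart of the statement is the observation that the particle being instantaneously at rest at $t=0$, i.e. $\dot q(0)=0$, is precisely what forces the light-cone singularities of $U$ and $\Psi$ to annihilate.
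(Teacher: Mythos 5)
Your proof is correct and follows essentially the same route as the paper: strip off the smooth d'Alembert part $V$, then use $\dot q(0)=0$ to evaluate the derivative of the retarded-time functions at $0$ (you differentiate the inverse function directly where the paper differentiates the defining relation implicitly, which is the same computation) and match the one-sided first partials across each ray. You additionally spell out the value-matching and the $s=t$ case, which the paper leaves as ``completely analogous.''
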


\begin{proof}
Because $V_{0}$ and $V_{1}$ are smooth, $V(t,s)$ is smooth. Hence, it suffices to look at the singular part of $u(t,s)$. Let:
\begin{equation}
      w(t,s) = U(t,s) + \Psi(t,s) = \frac{a}{2}\begin{cases} 
      s & s < -t \\
      T_{+}(s + t) - t & -t < s < q(t) \\
      T_{-}(s - t) - t & q(t) < s < t \\
      -s & s > t.
    \end{cases}
\end{equation}

We will first show $w(t,s)$ is $C^{1}$ across $s = -t$. We have:
\begin{equation}\label{leftlim}
\partial_{s}w|_{(s = -t)^{-}} = \partial_{s}(\frac{a}{2}s) = \frac{a}{2},
%\end{equation}
\quad
%\begin{equation}
\partial_{t}w|_{(s = -t)^{-}} = \partial_{t}(\frac{a}{2}s) = 0.
\end{equation}
Recall that $T_{+}(s + t) = \tau_{2}$ where $\tau_{2}$ solves the following:
\begin{equation}
q(\tau_{2}) + \tau_{2} = s + t.
\end{equation}
Using implicit differentiation by $s$ and $t$ yields:

\begin{equation}
\partial_{s}\tau_{2}\  \dot{q}(\tau_{2}) + \partial_{s}\tau_{2} = 1,
\quad
\partial_{t}\tau_{2}\   \dot{q}(\tau_{2}) + \partial_{t}\tau_{2} = 1
\end{equation}
respectively. At the line $s = -t$, $\tau_{2} = 0$, so $\dot{q}(\tau_{2}) = \dot{q}(0) = 0$. Hence:
\begin{equation}
\partial_{s}\tau_{2} = \partial_{t}\tau_{2} = 1.
\end{equation}
We thus have:
\begin{equation}
\partial_{s}w|_{(s = -t)^{+}} =  \frac{a}{2}\partial_{s}T_{+}(s + t) = \frac{a}{2}\partial_{s}\tau_{2} = \frac{a}{2}
\end{equation}
and
\begin{equation}
\partial_{t}w|_{(s = -t)^{+}} = \frac{a}{2}(\partial_{t}T_{+}(s + t) - 1) = \frac{a}{2}(\partial_{s}\tau_{2} - 1) = 0.
\end{equation}
Comparing with (\ref{leftlim}) shows that $w$ is $C^1$ across $s=-t$.  

The proof that
 $w(t,s)$ is $C^{1}$ across $s = t$ is completely analogous.

\end{proof}

\section{Computing the Kiessling force}
We would like to combine the solution \eqref{eq:uSol} for $u(t,s)$ with equation \eqref{eq:eom} to find a system of ODEs for $q(t)$. To do this, we need to work out the Kiessling force $f$ in \eqref{eq:eom}.

We begin by recalling that Kiessling {\em postulates} the local conservation of total energy-momentum for the field and particle system:
\begin{equation}\label{eq:cons}
\partial^{\mu}T^{\mbox{\tiny total}}_{\mu\nu} = 0.
\end{equation}
Here, $T^{\mbox{\tiny total}}_{\mu\nu}$ is the energy density-momentum density-stress tensor (or {\em energy-momentum tensor}, for short) for the field and particle system. To find the energy-momentum tensor for the field, we start with the Lagrangian. The Lagrangian for the massless scalar field is: 

\begin{equation}
\mathcal{L} = \frac{1}{2}\eta^{\mu\nu}\partial_{\mu}u\partial_{\nu}u.
\end{equation}
Here, $\eta = \mbox{diag}(1,-1)$ is the Minkowski metric.
The energy-momentum tensor for the field is defined as:

\begin{equation}
T_{\mu\nu}^{\mbox{\tiny field}} = 2\frac{\partial{\mathcal{L}}}{\partial{\eta^{\mu\nu}}} - \eta_{\mu\nu}\mathcal{L},
\end{equation}
and thus in this case
\begin{equation}\label{def:Tfield}
T_{\mu\nu}^{\mbox{\tiny field}} = \partial_\mu u\partial_\nu u - \frac{1}{2}\eta_{\mu\nu}\partial_\alpha u\partial^\alpha u.
\end{equation}
Since $u$ is expected to be singular on the worldline of the particle, the above is only well-defined away from the particle path, but our assumptions on the field are such that $T^{\mbox{\tiny field}}$ can be continued into the particle path as a spacetime distribution.

The energy-momentum tensor for the particle on the other hand is defined as a distribution on spacetime that is concentrated on the worldline $x^\mu = z^\mu(\tau)$ of the particle ($\tau$ is the arclength parameter):
\begin{equation}
    T^{\mbox{\tiny particle}}_{\mu\nu} := m \int \mathbf{u}_\mu \mathbf{u}_\nu \delta^{(2)}(x-z(\tau)) d\tau = \frac{m}{\mathbf{u}^0}\mathbf{u}_\mu \mathbf{u}_\nu \delta(s-q(t)),
\end{equation}
where $\mathbf{u}$ is the unit tangent to the worldline of the particle
\begin{equation}
   \mathbf{u}^\mu := dz^\mu/d\tau,\qquad \mathbf{u}_\mu \mathbf{u}^\mu = 1.
\end{equation}
The definition of $T^{\mbox{\tiny particle}}$ is such that:
\begin{equation}
\partial^{\mu}T_{\mu\nu}^{\mbox{\tiny particle}} = \mathbf{f}_{\nu}(t)\delta(s-q(t))
\end{equation}
holds, where the spacetime covector $\mathbf{f}_\nu$ is the 2-force acting on the particle (cf. \cite{kiessling}, eq. 72).

Let us take a second to consider how this relates to the $f(t,q,\dot{q})$, the force on the particle, that appears in \eqref{eq:eom}. There, $f(t,q,\dot{q})$ is clearly the spatial component of a spacetime (contravariant) vector. We therefore must have
\begin{equation}
    f(t,q(t),\dot{q}(t)) = \mathbf{f}^1(t) = - \mathbf{f}_1(t)
\end{equation}
since we have chosen the signature $(+,-)$ for the Minkowski metric.

Hence, setting $\nu = 1$:
\begin{equation}
\partial^{\mu}T^{\mbox{\tiny particle}}_{\mu 1}(t,s) = -f(t,q(t),\dot{q}(t))\delta(s - q(t)).
\end{equation}

Going back to the energy-momentum tensor for the field, we have for $\nu = 1$:

\begin{equation}
\partial^{\mu}T^{\mbox{\tiny field}}_{\mu 1} = \partial^{0}T^{\mbox{\tiny field}}_{01} + \partial^{1}T^{\mbox{\tiny field}}_{11} = \partial_{t}\pi - \partial_{s}\tau
\end{equation}
Using equation (\ref{eq:cons}), we have:

\begin{equation}
0=\partial^{\mu}T^{\mbox{\tiny total}}_{\mu 1} = \partial^{\mu}T_{\mu 1}^{\mbox{\tiny field}} + \partial^{\mu}T_{\mu 1}^{\mbox{\tiny particle}} = \partial_{t}\pi - \partial_{s}\tau - f(t,q,\dot{q})\delta(s - q(t)).
\end{equation}

Rearranging this gives us the momentum-balance law:

\begin{equation}\label{eq:mombal}
\partial_{t}\pi - \partial_{s}\tau = f(t,q,\dot{q})\delta(s - q(t)).
\end{equation}
From \eqref{def:Tfield} we have
\begin{equation}
\pi(t,s) = T^{\mbox{\tiny field}}_{01} =  u_{s}u_{t},
\end{equation}
\begin{equation}
\tau(t,s) = T^{\mbox{\tiny field}}_{11} = \frac{1}{2}(u_{s}^2+u_{t}^2).
\end{equation}

\begin{proposition}\label{prop:kf}
Assume that the field $u$ is Lipschitz continuous in a tubular neighborhood of the $C^1$ path $(t,q(t))$ of the particle, and that $u$ is $C^1$ on either side of the path.  Then the force appearing in (\ref{eq:mombal}) is 
\begin{equation}\label{def:kforce}
   f(t,q(t),\dot{q}(t)) = -\dot{q}[\pi(t,s)]_{s = q(t)} - [\tau(t,s)]_{s = q(t)} 
\end{equation}
where $[ \cdot ]_{s=q(t)}$ denotes the jump across the path.
\end{proposition}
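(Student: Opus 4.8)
The plan is to read \eqref{eq:mombal} as an identity of distributions and to extract the coefficient of $\delta(s-q(t))$ on its left-hand side, which by \eqref{eq:mombal} must equal $f$. The point enabling this is that, under the stated hypotheses, $\pi = u_s u_t$ and $\tau = \tfrac12(u_s^2+u_t^2)$ are genuine bounded functions that are $C^1$ on either side of the path and merely jump across it: they are quadratic in the first derivatives of $u$, which are bounded by the Lipschitz assumption and admit one-sided limits up to the path by the $C^1$-on-either-side assumption. Hence the traces of $\pi$ and $\tau$ from each side, and therefore the jumps $[\pi]$ and $[\tau]$, are well defined, and $\pi,\tau$ determine honest $L^\infty$ distributions whose derivatives we may compute.

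The first substantive step is to record the distributional derivatives of such a piecewise-$C^1$ function. I would set $\phi(t,s)=s-q(t)$, so that the path is $\{\phi=0\}$, the right region $s>q(t)$ is the ``$+$'' side $\{\phi>0\}$, and $[g]=g_+-g_-$. Pairing $\partial_t g$ and $\partial_s g$ with a test function $\varphi\in C_c^\infty$, splitting the plane along the path, and applying the divergence theorem on each side, the interior terms reassemble the classical (off-path) derivative $\{\partial g\}$ while the boundary contributions combine into a single line integral over the path. Since the outward normal times arclength element along $s=q(t)$ is $n\,d\sigma=(-\dot q,1)\,dt$, this produces
\[
\partial_t g = \{\partial_t g\} - \dot q\,[g]\,\delta(s-q(t)), \qquad \partial_s g = \{\partial_s g\} + [g]\,\delta(s-q(t)).
\]

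Applying both formulas to $g=\pi$ and $g=\tau$ and subtracting gives
\[
\partial_t\pi - \partial_s\tau = \{\partial_t\pi - \partial_s\tau\} - \bigl(\dot q\,[\pi] + [\tau]\bigr)\,\delta(s-q(t)),
\]
so it remains to verify that the regular part vanishes. A direct computation using $\pi=u_su_t$ and $\tau=\tfrac12(u_s^2+u_t^2)$ yields the identity $\partial_t\pi-\partial_s\tau = u_s\,(u_{tt}-u_{ss})$; since the source in \eqref{eq:wave} is supported on the path, $u$ solves the homogeneous wave equation on each side, whence $\{\partial_t\pi-\partial_s\tau\}\equiv 0$ off the path. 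Matching the surviving $\delta$-term with the right-hand side of \eqref{eq:mombal} then reads off $f=-\dot q\,[\pi]-[\tau]$, which is \eqref{def:kforce}.

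I expect the main obstacle to be conceptual rather than computational: justifying that $\pi$ and $\tau$ are bona fide $L^\infty$ functions with well-defined one-sided traces, rather than products of distributions that would be ill-defined on the path. This is precisely where the Lipschitz-plus-$C^1$-on-either-side hypothesis is essential, as it legitimizes both the jump formula of the first step and the meaning of the coefficients $[\pi],[\tau]$. A secondary point requiring care is obtaining the geometric factor $-\dot q$ in the $\partial_t$ relation, which comes from the $t$-component of the unit normal to the \emph{moving} path weighted by the arclength element; it is exactly this factor that generates the convective term $-\dot q[\pi]$ in \eqref{def:kforce}.
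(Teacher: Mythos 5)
Your proposal is correct, and it reaches \eqref{def:kforce} by a route that is equivalent to, but packaged differently from, the paper's. The paper integrates the distributional identity \eqref{eq:mombal} over a shrinking tube $\Omega_\epsilon=\{T_1\le t\le T_2,\ |s-q(t)|\le\epsilon\}$, applies Green's theorem, and lets $\epsilon\to 0$: the time-slice terms vanish because $\pi$ is bounded, and the two lateral boundary integrals along $s=q(t)\pm\epsilon$ converge to $-\int_{T_1}^{T_2}[\dot q\,\pi+\tau]_{s=q(t)}\,dt$, after which arbitrariness of $T_1,T_2$ gives the pointwise identity. You instead invoke the general jump formula for distributional derivatives of a piecewise-regular function across the moving interface $\phi=s-q(t)=0$ and match the coefficient of $\delta(s-q(t))$ against the right-hand side of \eqref{eq:mombal}. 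The geometric content is identical --- in both arguments the factor $-\dot q$ comes from the normal covector $(-\dot q,1)\,dt$ along the path, i.e.\ from $\partial_t\phi=-\dot q$, $\partial_s\phi=1$. What your version buys is an explicit splitting into regular and singular parts; what it costs is the extra step of checking that the regular part $\{\partial_t\pi-\partial_s\tau\}=u_s(u_{tt}-u_{ss})$ vanishes off the path, which as written uses second derivatives of $u$, more than the stated Lipschitz-plus-$C^1$ hypothesis. That step is in fact dispensable: once \eqref{eq:mombal} is postulated with right-hand side $f\,\delta(s-q(t))$, the locally integrable part of the left-hand side cannot contribute to the coefficient of the surface delta, so $f=-\dot q[\pi]-[\tau]$ follows from the singular part alone --- which is exactly why the paper's shrinking-tube argument never needs to evaluate the regular part. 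With that remark your proof is complete and matches the paper's conclusion.
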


\begin{proof}

\begin{figure}
\centering
\includegraphics[scale=.17]{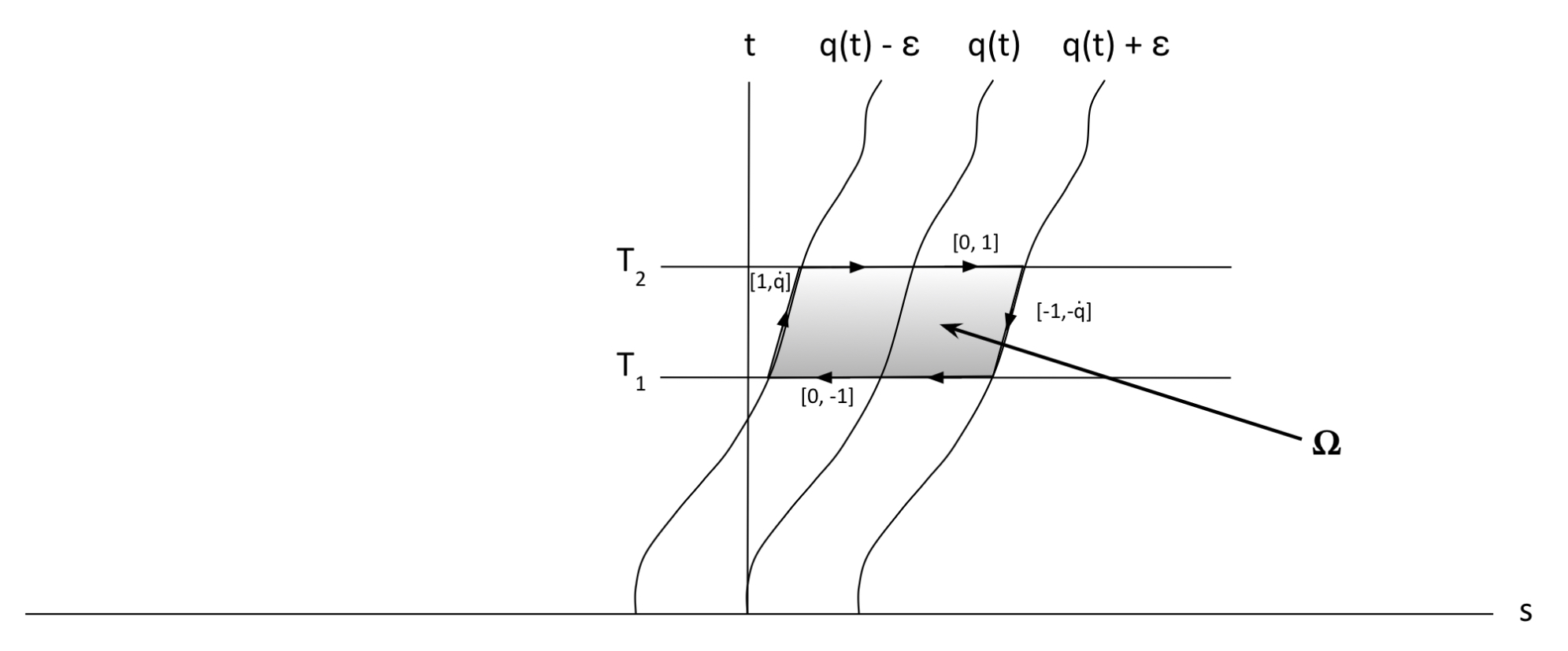}
\caption{Region of integration for the momentum-balance equation}
\label{fig2}
\end{figure}

Note that the assumptions imply that $\pi$ and $\tau$ are bounded and can at most have a jump discontinuity across the path, so that the right-hand side of (\ref{def:kforce}) is well-defined.
Let $\epsilon > 0$ and $T_{2} > T_{1} \geq 0$. We will integrate \eqref{eq:mombal} over the region  $\Omega = \{(t, s) | T_1 \leq t \leq T_2, q(t) - \epsilon \leq s \leq q(t) + \epsilon\}$ and then take the limit as $\epsilon$ goes to $0$. After integrating and taking the limit, the right-hand side of equation \eqref{eq:mombal} becomes:
\begin{equation}
\int_{T_{1}}^{T_{2}} f(t,q(t),\dot{q}(t)) dt.
\end{equation}
After integrating and using Green's Theorem, the left-hand side of \eqref{eq:mombal} becomes:
\begin{eqnarray*}
\int_{q(T_2) - \epsilon}^{q(T_2) + \epsilon} \pi(T_2, s)ds  &- & \int_{q(T_1) - \epsilon}^{q(T_1) + \epsilon} \pi(T_1, s)ds \\
& - & \left.\int_{T_{1}}^{T_{2}} \dot{q}\pi(t,s) + \tau(t,s)\right|_{s = q(t) + \epsilon} dt \\
& + & \left.\int_{T_{1}}^{T_{2}} \dot{q}\pi(t,s) + \tau(t,s)]\right|_{s = q(t) - \epsilon} dt.
\end{eqnarray*}
Because $\pi$ is locally integrable, the first two terms go to $0$ as $\epsilon \to 0$. Taking the limit as $\epsilon$ goes to $0$ in the other two terms gives us:
\begin{equation}
- \int_{T_{1}}^{T_{2}} [\dot{q}\pi(t,s) + \tau(t,s)]_{s = q(t)} dt
\end{equation}
for the left-hand side. Because $T_{1}$ and $T_{2}$ were arbitrary, we therefore have:
\begin{equation}
f(t,q(t),\dot{q}(t)) = -\dot{q}[\pi(t,s)]_{s = q(t)} - [\tau(t,s)]_{s = q(t)}.
\end{equation}
\end{proof}
\begin{proposition}
Assume $u(t,s)$ is a solution to the joint evolution problem (\ref{eq:wave}-\ref{eq:eom}). The Kiessling force in (\ref{eq:eom}) is given by:
\begin{equation}
f(t,q,\dot{q}) = aV_s(t,q) - \frac{a^2}{2}\frac{\dot{q}}{1-\dot{q}^2}
\end{equation}
\end{proposition}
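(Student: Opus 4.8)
The plan is to apply the force formula from Proposition~\ref{prop:kf}, namely $f = -\dot{q}\,[\pi]_{s=q(t)} - [\tau]_{s=q(t)}$ with $\pi = u_s u_t$ and $\tau = \tfrac12(u_s^2+u_t^2)$, directly to the explicit solution $u = w + V$ produced in Proposition~1, where $w = U+\Psi$ carries the singularity and $V$ is smooth. Since $V$ is $C^\infty$ everywhere, $V_s$ and $V_t$ are continuous across the path, so the only jumps in the first derivatives of $u$ arise from $w$. The whole computation therefore reduces to evaluating the four one-sided derivatives $u_s^\pm,\, u_t^\pm$ at $s=q(t)$, forming the quadratic quantities $\pi$ and $\tau$ on each side, and simplifying.

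First I would compute the one-sided derivatives of $w$. To the left of the path ($-t<s<q(t)$) one has $w = \tfrac a2(T_+(s+t)-t)$, and to the right ($q(t)<s<t$) one has $w = \tfrac a2(T_-(s-t)-t)$. Differentiating by the chain rule reduces everything to the slopes $T_\pm'$ evaluated at the path. These I would obtain exactly as in the proof of Proposition~2: implicit differentiation of $q(\tau)\pm\tau = s\pm t$ expresses $\partial_s\tau$ and $\partial_t\tau$ in terms of $\dot{q}(\tau)$, and the key observation is that on the path $s=q(t)$ the relevant retarded time is $\tau = t$, so $\dot{q}(\tau)=\dot{q}(t)$. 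This yields
\begin{equation}
u_s^- = \frac{a}{2(1+\dot{q})} + V_s, \quad u_t^- = -\frac{a\dot{q}}{2(1+\dot{q})} + V_t, \quad u_s^+ = -\frac{a}{2(1-\dot{q})} + V_s, \quad u_t^+ = \frac{a\dot{q}}{2(1-\dot{q})} + V_t,
\end{equation}
all evaluated at $(t,q(t))$.

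With these in hand I would assemble the jumps $[\pi] = u_s^+u_t^+ - u_s^-u_t^-$ and $[\tau] = \tfrac12\big((u_s^+)^2+(u_t^+)^2-(u_s^-)^2-(u_t^-)^2\big)$ and substitute into $f = -\dot{q}[\pi]-[\tau]$. Writing $\alpha = \tfrac{a}{2(1+\dot{q})}$ and $\beta = \tfrac{a}{2(1-\dot{q})}$, the algebra collapses using the two identities $\alpha+\beta = \tfrac{a}{1-\dot{q}^2}$ and $\beta^2-\alpha^2 = \tfrac{a^2\dot{q}}{(1-\dot{q}^2)^2}$. The terms linear in $V_t$ cancel against one another, the $V_s^2, V_t^2$ pieces cancel in the jumps, and what survives is $aV_s$ from the cross term between $w$ and $V$ together with $-\tfrac{a^2}{2}\tfrac{\dot{q}}{1-\dot{q}^2}$ from the purely singular part, giving the claimed expression.

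The main obstacle is not conceptual but a matter of bookkeeping in this last step: keeping the signs straight in the jump convention ($[g]=g^+-g^-$) and verifying that \emph{every} $V_t$-dependent contribution indeed cancels, so that the force depends on the radiation field only through $V_s(t,q)$. It is worth emphasizing that the finiteness of the jumps — guaranteed by the Lipschitz regularity of $u$ established in Proposition~1 — is exactly what makes the right-hand side of the force formula well-defined. The self-interaction of the charge with its own singular field contributes the explicit, purely velocity-dependent term $-\tfrac{a^2}{2}\dot{q}/(1-\dot{q}^2)$, while the external radiation enters only through the smooth part $V$; this is precisely the reduction announced in the proof outline that lets us eliminate the field from \eqref{eq:eom}.
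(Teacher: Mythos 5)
Your proposal is correct and follows essentially the same route as the paper: apply the jump formula of Proposition~\ref{prop:kf} to $u = w + V$, compute the four one-sided derivatives of $w$ on the path via implicit differentiation of $q(\tau)\pm\tau = s\pm t$ (your values agree with the paper's), and simplify; the identities $\alpha+\beta = a/(1-\dot{q}^2)$ and $\beta^2-\alpha^2 = a^2\dot{q}/(1-\dot{q}^2)^2$ reproduce exactly the paper's jump computations in slightly more compact notation. No gaps.
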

\begin{proof}
By Prop.~\ref{prop:kf}, the force is:
\begin{equation}\label{eq:kforcejump}
f(t,q(t),\dot{q}(t)) = -\dot{q}[\pi]_{s = q(t)} - [\tau]_{s = q(t)} = -\dot{q}[u_{s}u_{t}]_{s = q(t)} - \frac{1}{2}[u_{s}^2 + u_{t}^2]_{s = q(t)}
\end{equation}
using $u(t,s)$ as given by equation (\ref{eq:uSol}). Substituting in $u = V + w$ gives us

\begin{equation}
[u_tu_s]_{s=q(t)} = V_t[w_s]_{s = q(t)} + V_s[w_t]_{s = q(t)} + [w_sw_t]_{s = q(t)}
\end{equation}
and
\begin{equation}
[\frac{1}{2}u^2_t + \frac{1}{2}u^2_s]_{s = q(t)} = \frac{1}{2}[w^2_t]_{s = q(t)} + \frac{1}{2}[w^2_s]_{s = q(t)} + V_s[w_s]_{s = q(t)} + V_t[w_t]_{s = q(t)}
\end{equation}
where we've used the fact that $V$ is smooth. To determine the necessary values, we will first compute $w_{s}|_{s=q(t)^{-}}$, $w_{s}|_{s=q(t)^{+}}$, $w_{t}|_{s=q(t)^{-}}$, and $w_{t}|_{s=q(t)^{+}}$. We have:

\begin{equation}
w_{s}|_{s=q(t)^{-}} = \partial_{s}(\frac{a}{2}(T_{+}(s+t) - t)) = \frac{a}{2}\frac{1}{\dot{q}(t) + 1},
\end{equation}

\begin{equation}
w_{s}|_{s=q(t)^{+}} = \partial_{s}(\frac{a}{2}(T_{-}(s-t) - t)) = \frac{a}{2}\frac{1}{\dot{q}(t) - 1},
\end{equation}

\begin{equation}
w_{t}|_{s=q(t)^{-}} = \partial_{t}(\frac{a}{2}(T_{+}(s+t) - t)) = -\frac{a}{2}\frac{\dot{q}(t)}{\dot{q}(t) + 1},
\end{equation}

\begin{equation}
w_{t}|_{s=q(t)^{+}} = \partial_{t}(\frac{a}{2}(T_{-}(s-t) - t)) = -\frac{a}{2}\frac{\dot{q}(t)}{\dot{q}(t) - 1}.
\end{equation}

Thus, our final results for $[w_{s}]_{s=q(t)}$, $[w_{t}]_{s=q(t)}$, $[w_{s}w_{t}]_{s=q(t)}$, $[w_{s}^2]_{s = q(t)}$, and $[w_{t}^2]_{s = q(t)}$ are:

\begin{equation}
[w_{s}]_{s=q(t)} = \frac{a}{2}(\frac{1}{\dot{q}(t) - 1} - \frac{1}{\dot{q}(t) + 1}) = \frac{a}{\dot{q}(t)^{2} - 1},
\end{equation}

\begin{equation}
[w_{t}]_{s=q(t)} = -\frac{a\dot{q}(t)}{2}(\frac{1}{\dot{q}(t) - 1} - \frac{1}{\dot{q}(t) + 1}) = -\frac{a\dot{q}(t)}{\dot{q}(t)^{2} - 1},
\end{equation}

\begin{equation}
[w_{s}w_{t}]_{s=q(t)} = -\frac{a^{2}\dot{q}(t)}{4}(\frac{1}{(\dot{q}(t) - 1)^{2}} - \frac{1}{(\dot{q}(t) + 1)^{2}}) = -\frac{a^{2}\dot{q}(t)^{2}}{(\dot{q}(t)^{2} - 1)^{2}},
\end{equation}

\begin{equation}
[w^2_t]_{s=q(t)} = \frac{a^2}{4}((\frac{\dot{q}(t)}{\dot{q}(t) - 1})^2 - (\frac{\dot{q}(t)}{\dot{q}(t) + 1})^2) = \frac{a^2\dot{q}(t)^3}{(\dot{q}(t)^2 - 1)^2},
\end{equation}

\begin{equation}
[w^2_s]_{s=q(t)} = \frac{a^2}{4}((\frac{1}{\dot{q}(t) - 1})^2 - (\frac{1}{\dot{q}(t) + 1})^2) = \frac{a^2\dot{q}(t)}{(\dot{q}(t)^2 - 1)^2}.
\end{equation}

Inserting these values into equation (\ref{eq:kforcejump}) gives us

\begin{equation}
f(t,q(t),\dot{q}(t)) = aV_s(t,q(t)) - \frac{a^2}{2}\frac{\dot{q}(t)}{1-\dot{q}(t)^2}.
\end{equation}

\end{proof}

Note that the first term represents the force that the external field is exerting on the particle. That is, the first term is usually taken to be the force acting on a scalar particle. The second term represents the {\em self-force} (the force the particle exerts on itself), which here is in the opposite direction of the motion.

\section{Equations of motion as a dynamical system}
We can now look at the equations of motion for the particle, which are the following:

\begin{equation}
 \begin{cases} 
      \dot{q} = \frac{p}{m\sqrt{1 + \frac{p^2}{m^2}}} \\
      \dot{p} = aV_s(t,q(t)) - \frac{a^2}{2}\frac{\dot{q}(t)}{1-\dot{q}(t)^2}. \\
   \end{cases}
\end{equation}

We substitute the expression for $\dot{q}$ into the equation of $\dot{p}$, which results in the following equation: 
\begin{equation}
\dot{p} = aV_s(t,q(t)) - \frac{a^2}{2}\frac{p}{m} \sqrt{1 + \frac{p^2}{m^2}}.
\end{equation}
In addition to this, let us rewrite $V_{s}(t,q(t))$. Recall that:
\begin{equation}
V(t,s) = \frac{1}{2}(V_{0}(s+t) + V_{0}(s-t)) + \frac{1}{2}\int_{s-t}^{s+t} V_{1}(x) dx.
\end{equation}
Hence, we have that:
\begin{equation}
V_{s}(t,s) = \frac{1}{2}(\dot{V_{0}}(s+t) + \dot{V_{0}}(s-t) + V_{1}(s+t) - V_{1}(s-t)).
\end{equation}
Let us further define $F(s)$ and $G(s)$ as follows:
\begin{equation}\label{eq:fandg}
 \begin{cases} 
      F(s) = \dot{V_{0}}(s) + V_{1}(s) \\
      G(s) = \dot{V_{0}}(s) - V_{1}(s). \\
   \end{cases}
\end{equation}
From our definitions of $F$ and $G$, we can rewrite our equations of motion, specifically the expression for $\dot{p}$. It now becomes:
\begin{equation}
 \begin{cases} 
      \dot{q} = \frac{p}{m\sqrt{1 + \frac{p^2}{m^2}}}\\
      \dot{p} = \frac{a}{2}(F(q + t) + G(q - t)) - \frac{a^2}{2}\frac{p}{m} \sqrt{1 + \frac{p^2}{m^2}}. \\
   \end{cases}
\end{equation}

We can further simplify our equations using a change of variables so that we can get rid of the square roots. We will let $\frac{p}{m} = \tan\theta$, so our new equations become:

\begin{equation}
 \begin{cases} 
      \dot{q} = \sin\theta \\
      \dot{\theta} = \frac{a}{2m}(F(q + t) + G(q - t))\cos^2\theta - \frac{a^2}{2m}\sin\theta. \\
   \end{cases}
\end{equation}

To get an autonomous system, we define new unknowns:

\begin{equation}
 \begin{cases} 
      d(t) = q(t) + t \\
      b(t) = q(t) - t \\
   \end{cases}
\end{equation}
and write the system of three equations as follows:
\begin{equation}\label{eq:sys3}
 \begin{cases} 
      \dot{d} = \sin\theta + 1 \\
      \dot{b} = \sin\theta - 1 \\
      \dot{\theta} = \frac{a}{2m}(F(d) + G(b))\cos^2{\theta} - \frac{a^2}{2m}\sin\theta. \\
   \end{cases}
\end{equation}

To solve this, we need to look for a solution $(b,d, \theta)$ such that:
\begin{equation}\label{eq:sys3init}
 \begin{cases} 
      b(0) = 0\\
      d(0) = 0\\
      \theta(0) = 0.\\
   \end{cases}
\end{equation}
These are the consequences of our initial conditions $q(0) = 0$ and $\dot{q}(0) = 0$. Furthermore, we know the following limits for each variable: $0 < d(t) < 2t$, $-2t < b(t) < 0$, $-\pi/2 \leq \theta(t) \leq \pi/2$.

We will split this into two cases: one where the bare mass, $m$, is positive and one where it is negative. For the first, we will prove that the solution will always be stable. For the second, we will show a case where the solution is unstable. 

\section{Proof of stability for positive bare mass}
In the case of positive bare mass, we are concerned with (\ref{eq:sys3}) with $m > 0$. Recall that because $F(d)$ and $G(b)$ are defined as in (\ref{eq:fandg}), they must be compactly supported. Furthermore, note that it must always be the case that $\dot{d} = \sin\theta + 1 \geq 0$ and that $\dot{b} = \sin\theta - 1 \leq 0$. Hence, it suffices to look only at the region where $b \leq 0 $ and $d \geq 0$. We will define $[b_{L},b_{R}]$ such that $-\infty < b_{L} < b_{R} \leq 0$ and $G(b) = 0$ for $b$ outside $[b_{L},b_{R}]$. Similarly, we will define $[d_{L},d_{R}]$ such that $0 \leq d_{L} < d_{R} < \infty$ and $F(d) = 0$ for $d$ outside $[d_{L},d_{R}]$. Based on the fact that $\dot{d} \geq 0$ and $\dot{b} \leq 0$, Figure \ref{fig3} shows a rough sketch of the trajectory of the solution projected onto the $(b,d)$ plane. 

\begin{figure}[htbp]
\centerline{\includegraphics[scale=.16]{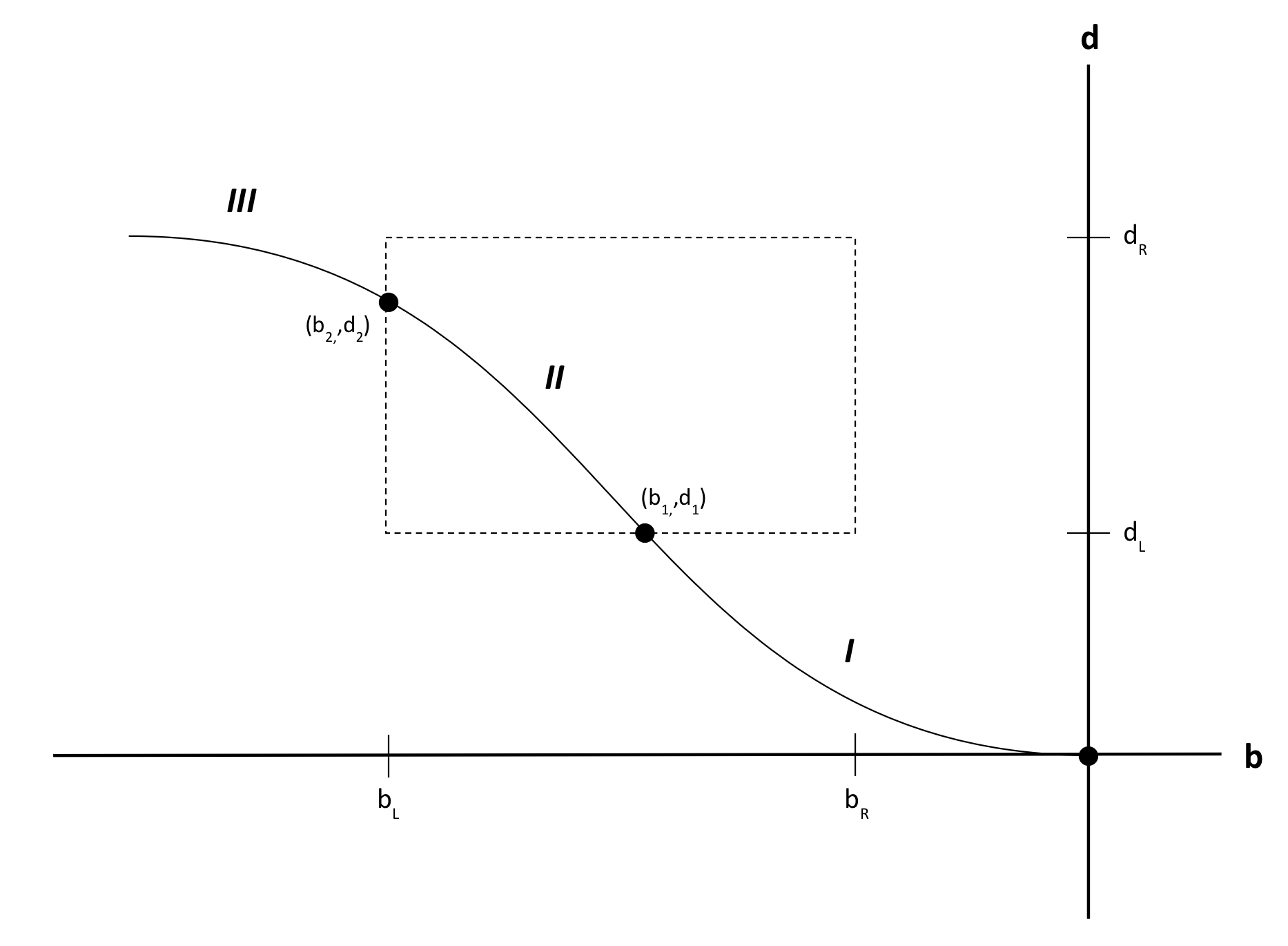}}
\caption{Projection of trajectory of the solution to (\ref{eq:sys3}) in the (b,d) plane}
\label{fig3}
\end{figure}

Based on this, we will divide this analysis into three regions: before the radiation, during the radiation, and after the radiation. In the first region, $G(b) = F(d) = 0$, so (\ref{eq:sys3}) reduces to:

\begin{equation}\label{eq:sys3norad}
 \begin{cases} 
      \dot{d} = \sin\theta + 1 \\
      \dot{b} = \sin\theta - 1 \\
      \dot{\theta} = -\frac{a^2}{2m}\sin\theta. \\
   \end{cases}
\end{equation}

One can check that the unique solution to (\ref{eq:sys3norad}) with initial conditions in (\ref{eq:sys3init}) is:

\begin{equation}
 \begin{cases} 
      d(t) = t \\
      b(t) = -t \\
      \theta(t) = 0.\\
   \end{cases}
\end{equation}

Hence, the particle will have the following conditions entering the second region:
\begin{equation}\label{eq:cond2}
 \begin{cases} 
      b(t_{1}) = b_{1}\\
      d(t_{1}) = d_{1}\\
      \theta(t_{1}) = 0\\
   \end{cases}
\end{equation}
where $t_{1} > 0$. In the second region, note that when $\theta = \frac{\pi}{2}$, $\dot{\theta} = -\frac{a^2}{2m} < 0$. Similarly, when $\theta = -\frac{\pi}{2}$, $\dot{\theta} = \frac{a^2}{2m} > 0$. Hence, the trajectory can never cross $\theta = \frac{\pi}{2}$ and $\theta = -\frac{\pi}{2}$. The particle will thus end up with the following conditions entering the third region:

\begin{equation}\label{eq:cond3}
 \begin{cases} 
      b(t_{2}) = b_{2}\\
      d(t_{2}) = d_{2}\\
      \theta(t_{2}) = \theta_{2}\\
   \end{cases}
\end{equation}
where $t_{2} > 0$ and $-\frac{\pi}{2} < \theta_{2} < \frac{\pi}{2}$. Hence, the third interval will amount to solving the following set of ODEs with conditions specified in (\ref{eq:cond3}):

\begin{equation}\label{eq:sys3noradii}
 \begin{cases} 
      \dot{b} = \sin\theta - 1 \\
      \dot{d} = \sin\theta + 1 \\
      \dot{\theta} = -\frac{a^2}{2m}\sin\theta. \\
   \end{cases}
\end{equation}

We solve the third ODE explicitly in Proposition 5. 

\hspace{0cm}

\begin{proposition}
Suppose we have (\ref{eq:sys3noradii}) with initial conditions in (\ref{eq:cond3}).

\begin{enumerate}[label = (\alph*)]
\item If $\theta_{2} = 0$, $\theta(t) = 0$ for $t > t_{2}$.
\item If $\frac{\pi}{2} > \theta_{2} > -\frac{\pi}{2}$ and $\theta_{2} \neq 0$, then $\underset{t \rightarrow \infty}{lim} \theta(t) = 0$.
\end{enumerate}
\end{proposition}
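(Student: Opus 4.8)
The plan is to exploit the fact that in system (\ref{eq:sys3noradii}) the equation for $\theta$ completely decouples from those for $b$ and $d$: it is the scalar autonomous ODE $\dot\theta = -k\sin\theta$ with $k := \frac{a^2}{2m} > 0$ (recall that $m>0$ throughout this section and $a \neq 0$). Its equilibria are the integer multiples of $\pi$, and linearizing at $\theta = 0$ gives $\dot\theta \approx -k\theta$, so $\theta = 0$ is asymptotically stable. Since the prescribed initial value $\theta_2$ lies in the open interval $(-\pi/2, \pi/2)$, the entire analysis takes place strictly between the equilibria $\theta = 0$ and $\theta = \pm\pi$, which is what keeps the argument clean.

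Part (a) is immediate: if $\theta_2 = 0$ then the constant function $\theta \equiv 0$ solves the ODE with the given initial datum, and since the right-hand side $-k\sin\theta$ is Lipschitz, the Picard--Lindel\"of uniqueness theorem forces $\theta(t) = 0$ for all $t > t_2$.

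For part (b) I would produce the explicit solution by separation of variables, which is presumably what the phrase ``solve the third ODE explicitly'' is pointing at. Writing $\int \csc\theta\, d\theta = -k\, dt$ and integrating gives $\ln\bigl|\tan\tfrac{\theta}{2}\bigr| = -k(t - t_2) + \ln\bigl|\tan\tfrac{\theta_2}{2}\bigr|$, that is, $\tan\tfrac{\theta}{2} = \tan\tfrac{\theta_2}{2}\, e^{-k(t - t_2)}$. The sign here is legitimate because $\theta$ cannot change sign: by uniqueness it can never reach the equilibrium $\theta = 0$ in finite time, so it retains the sign of $\theta_2$. Letting $t \to \infty$, the right-hand side tends to $0$, hence $\tan(\theta/2) \to 0$ and therefore $\theta(t) \to 0$, which is the claim.

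As a cross-check, and to justify the sign bookkeeping above, I would also record the purely qualitative argument: for $0 < \theta_2 < \pi/2$ one has $\sin\theta > 0$, so $\dot\theta < 0$ and $\theta$ decreases; it is trapped below by the equilibrium $\theta = 0$, which it cannot cross, hence it converges monotonically to a limit, and that limit must be $0$ since $0$ is the only equilibrium in $[0, \theta_2]$. The case $-\pi/2 < \theta_2 < 0$ is symmetric. I do not anticipate any genuine obstacle: the system has been engineered so that the $\theta$-dynamics reduce to a one-dimensional gradient-like flow with potential $-k\cos\theta$ whose unique minimum on $(-\pi/2,\pi/2)$ is at $\theta = 0$. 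The only points requiring a little care are verifying that the solution remains within $(-\pi/2, \pi/2)$ (so that the $\csc\theta$ antiderivative stays valid and no neighboring equilibrium interferes) and confirming the sign when removing the absolute values, both of which follow at once from monotonicity together with uniqueness.
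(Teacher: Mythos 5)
Your proposal is correct and follows essentially the same route as the paper: part (a) by uniqueness for the constant solution, and part (b) by separating variables in the decoupled autonomous equation $\dot\theta=-\frac{a^2}{2m}\sin\theta$; your antiderivative $\ln\lvert\tan(\theta/2)\rvert$ is just the reciprocal form of the paper's $\csc\theta+\cot\theta=\cot(\theta/2)$, and both yield the same exponential decay to $0$.
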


\begin{proof}
$\theta(t) = 0$ is a trivial solution which satisfies $\theta(t_{2}) = 0$. Noting that $-\frac{a^2}{2m}\sin\theta$ and its derivative with respect to $\theta$ is continuous everywhere, we see that such a solution is unique. Hence, (a) follows.

For (b), assume $\theta_{2} > 0$. The proof is similar for $\theta_{2} < 0$. If $\theta = 0$ for some time $t_{3} > t_{2}$, we are left with (a), and $\theta(t) = 0 < \epsilon$ for $t > t_{3}$. Hence, assume $\theta > 0$ at all times. Then $\sin\theta \neq 0$, and we can separate the third equation of (\ref{eq:sys3noradii}):
\begin{equation}
\frac{1}{\sin\theta} d\theta = -\frac{a^2}{2m} dt. \\
\end{equation}
Integrating, we have:
\begin{equation}
-\ln|\csc(\theta) + \cot(\theta)| = -\frac{a^2}{2m}t + C_{0} \\
\end{equation}
or
\begin{equation}
\csc(\theta) + \cot(\theta) = C_{1}e^{\frac{a^2}{2m}t}.
\end{equation}

We get rid of the absolute value by choosing the sign for $C_{1}$. Here, $C_{1} > 0$ since $\theta(t_{2}) > 0$ implies $\csc(\theta) + \cot(\theta) > 0$. For any $\epsilon > 0$, we can choose $t_{3}$ such that $\csc(\epsilon) + \cot(\epsilon) = C_{1}e^{\frac{a^2}{2m}t_{3}}$. Then for $t > t_{3}$, $\csc(\theta(t)) + \cot(\theta(t)) = C_{1}e^{\frac{a^2}{2m}t} > C_{1}e^{\frac{a^2}{2m}t_{3}} = \csc(\epsilon) + \cot(\epsilon)$. Thus, $\theta(t) < \epsilon$.
\end{proof}

With this, we have proved the stability of the solution for positive bare mass. In the case where $\theta_{2} = 0$, we see that $\theta(t) = 0$ for $t > t_{2}$. Recalling that $\dot{q} = \sin\theta$, we have $\dot{q}(t) = 0$ for $t > t_{2}$. In the case where  $\theta_{2} \neq 0$, we see that since $\underset{t \rightarrow \infty}{\text{lim}} \theta(t) = 0$ and $\dot{q} = \sin\theta$, $\underset{t \rightarrow \infty}{\text{lim}} \dot{q}(t) = 0$. In either case, the speed of the particle goes to zero.

\section{Proof of instability for negative bare mass}

In the case of negative bare mass, we are concerned with (\ref{eq:sys3}) with $m < 0$. We will show that a specific case of radiation leads to an unstable solution. Assume that the radiation is purely incoming: $F \equiv 0$. Set $G_{\beta}(x) = \beta\sin(\pi x)\chi_{[-3,-1]}$ where $\beta \in \mathbb{R}$. We can ignore the $d$ equation and are left with a system of two equations:

\begin{equation}\label{eq:sys2pre}
 \begin{cases} 
      \dot{b} = \sin\theta - 1 \\
      \dot{\theta} = \frac{a \cos^2\theta}{2m}(G_{\beta}(b)) - \frac{a^2}{2m}\sin\theta, \\
   \end{cases}
\end{equation}

\begin{equation}
 \begin{cases} 
      b(0) = 0\\
      \theta(0) = 0.\\
   \end{cases}
\end{equation}

We will rewrite (\ref{eq:sys2pre}) by replacing $m$ with $-|m|$:

\begin{equation}\label{eq:sys2}
 \begin{cases} 
      \dot{b} = \sin\theta - 1 \\
      \dot{\theta} = - \frac{a }{2|m|}G_{\beta}(b)\cos^2\theta + \frac{a^2}{2|m|}\sin\theta, \\
   \end{cases}
\end{equation}

\begin{equation}\label{eq:sys2init}
 \begin{cases} 
      b(0) = 0\\
      \theta(0) = 0.\\
   \end{cases}
\end{equation}

In addition, it will be useful to work with the reverse flow of the system:

\begin{equation}\label{eq:rev}
 \begin{cases} 
      \dot{b} = -\sin\theta + 1 \\
      \dot{\theta} = \frac{a \cos^2\theta}{2|m|}(G_{\beta}(b)) - \frac{a^2}{2|m|}\sin\theta. \\
   \end{cases}
\end{equation}

Note that $\beta = 0$ corresponds to the static solution. To see this, note that (\ref{eq:sys2}) becomes:
\begin{equation}\label{eq:sys2norad}
 \begin{cases} 
      \dot{b} = \sin\theta - 1 \\
      \dot{\theta} = \frac{a^2}{2|m|}\sin\theta. \\
   \end{cases}
\end{equation}
The solution to this with initial conditions given in (\ref{eq:sys2init}) is:
\begin{equation}\label{eq:static}
 \begin{cases} 
      b(t) = -t \\
      \theta(t) = 0. \\
   \end{cases}
\end{equation}
Since $\theta = 0$ corresponds to $\dot{q} = 0$, (\ref{eq:static}) is the static solution. 

To get a better sense of the system of ODEs in (\ref{eq:sys2}) for non-zero $\beta$, see Figure \ref{fig4}. The interval $[-3,-1]$ represents the particle becoming perturbed by the incoming radiation. A particle with initial conditions given in (\ref{eq:sys2init}) will end up with $b = -1$ and $\theta = 0$. To see this, note that outside of $[-3, -1]$, $G_{\beta}(b) = 0$, and (\ref{eq:sys2}) reduces to (\ref{eq:sys2norad}). Hence, in the interval $[-1, 0]$, the solution to (\ref{eq:sys2}) with initial conditions given in (\ref{eq:sys2init}) is (\ref{eq:static}). At $t_{1} = 1$, $b(t_{1}) = -1$ and $\theta(t_{1}) = 0$. At this point, we can take the system of ODEs in (\ref{eq:sys2init}) and modify the initial conditions as follows:

\begin{equation}
 \begin{cases} 
      b(t_{1}) = -1\\
      \theta(t_{1}) = 0.\\
   \end{cases}
\end{equation}

After entering the interval $[-3, -1]$, Figure \ref{fig4} suggests that the particle will oscillate. The question is whether or not the particle will go back to rest ($\theta = 0$ at $b = -3$, represented by the grey line in Figure \ref{fig4}) or be left with some speed ($\theta \neq 0$ at $b = -3$). In the former case, the particle will remain at rest. In the latter case, the particle will go toward $\theta = \pm\frac{\pi}{2}$. Recalling that $\dot{q} = \sin\theta$, this means $\dot{q} = \pm 1$. That is, the particle will reach the speed of light in finite time. This is proved formally in the next proposition.

\hspace{0cm}
\newpage{}

\begin{proposition}
Suppose for some $t_{0} > 0$, $b(t_{0}) = -3$.

\begin{enumerate}[label = (\alph*)]
\item If $\theta_{0} = \theta(t_{0}) = 0$, $\theta(t) = 0$ for $t > t_{0}$.
\item If $\frac{\pi}{2} > \theta(t_{0}) > 0$, $\theta(t_{1}) = \frac{\pi}{2}$ for some $t_{1} > t_{0}$.
\item If $-\frac{\pi}{2} < \theta(t_{0}) < 0$, $\theta(t_{1}) = -\frac{\pi}{2}$ for some $t_{1} > t_{0}$.
\end{enumerate}
\end{proposition}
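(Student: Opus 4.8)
The plan is to exploit the fact that, once $b(t_0)=-3$, the particle has exited the support of the incoming radiation. Since $G_\beta$ is supported on $[-3,-1]$ and $\dot b=\sin\theta-1\le 0$ forces $b$ to be non-increasing, we will have $G_\beta(b(t))=0$ for all $t\ge t_0$, provided $b$ genuinely stays below $-3$. I would first record the relevant monotonicity precisely: for $\theta\in(-\tfrac{\pi}{2},\tfrac{\pi}{2})$ one has $\dot b=\sin\theta-1<0$ strictly, so as long as $\theta$ has not yet reached $\pm\tfrac{\pi}{2}$ the coordinate $b$ continues to strictly decrease and remains in the region $\{b<-3\}$ where the radiation term vanishes. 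Consequently the $\theta$-equation in (\ref{eq:sys2}) decouples and reduces to the autonomous scalar ODE
\begin{equation}
\dot\theta=\frac{a^2}{2|m|}\sin\theta .
\end{equation}
The decisive structural feature is the sign: because $m<0$, the coefficient $a^2/(2|m|)$ is positive, so $\theta=0$ is now an \emph{unstable} equilibrium, exactly the reverse of the stable case treated in Proposition 5.

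Part (a) I would dispatch immediately by uniqueness: $\theta\equiv 0$ solves the decoupled ODE with $\theta(t_0)=0$, and since $\tfrac{a^2}{2|m|}\sin\theta$ is $C^1$ in $\theta$ the solution is unique, so $\theta(t)=0$ for $t>t_0$. This is the same argument as Proposition 5(a).

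For parts (b) and (c) I would integrate by separation of variables. Treating (b) with $0<\theta(t_0)<\tfrac{\pi}{2}$, we have $\dot\theta>0$, so $\theta$ increases; dividing by $\sin\theta$ (legitimate since $\sin\theta>0$ on $(0,\tfrac{\pi}{2})$) and integrating gives
\begin{equation}
\int_{\theta(t_0)}^{\theta(t)}\frac{d\theta'}{\sin\theta'}=\frac{a^2}{2|m|}(t-t_0),
\end{equation}
equivalently $\tan(\theta(t)/2)=\tan(\theta(t_0)/2)\,\exp\!\bigl(\tfrac{a^2}{2|m|}(t-t_0)\bigr)$. The key observation is that the improper integral $\int_{\theta(t_0)}^{\pi/2}\tfrac{d\theta'}{\sin\theta'}$ is \emph{finite}, because $\sin\theta'$ is bounded away from zero near $\tfrac{\pi}{2}$; therefore the right-hand side attains this finite value at the finite time $t_1=t_0+\tfrac{2|m|}{a^2}\ln\!\bigl(\cot(\theta(t_0)/2)\bigr)$, at which $\theta=\tfrac{\pi}{2}$. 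Case (c) is identical after the reflection $\theta\mapsto-\theta$, landing at $\theta=-\tfrac{\pi}{2}$. This finite-arrival phenomenon is precisely what distinguishes the instability from the stable case, where the sign flip made the analogous integral diverge and $\theta$ approached the equilibrium only as $t\to\infty$.

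The only genuine subtlety, and the step I would be most careful about, is the self-consistency of the decoupling: one must confirm that $b$ does not re-enter $[-3,-1]$ during the approach to $\pm\tfrac{\pi}{2}$. This is exactly where the strict inequality $\dot b=\sin\theta-1<0$ for $\theta\ne\tfrac{\pi}{2}$ is used, as it keeps $b<-3$ and hence $G_\beta(b)\equiv 0$ throughout the finite interval $[t_0,t_1)$, legitimizing the isolated scalar computation up to $t_1$. Everything else is a routine separable-ODE calculation; the heart of the instability is simply the positivity of the coefficient $a^2/(2|m|)$, which converts the neutral equilibrium of the stable case into a repeller reached in finite time.
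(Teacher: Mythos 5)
Your proposal is correct and follows essentially the same route as the paper: reduce to the radiation-free system using the monotonicity of $b$, handle (a) by uniqueness, and solve (b),(c) by separating variables to obtain an explicit finite arrival time at $\theta=\pm\frac{\pi}{2}$. The only cosmetic difference is that you write the antiderivative of $1/\sin\theta$ as $\ln\tan(\theta/2)$ while the paper uses $-\ln(\csc\theta+\cot\theta)$; since $\csc\theta+\cot\theta=\cot(\theta/2)$, your formula $t_1=t_0+\frac{2|m|}{a^2}\ln\bigl(\cot(\theta(t_0)/2)\bigr)$ agrees exactly with theirs, and your explicit attention to the strictness of $\dot b<0$ is if anything slightly more careful than the paper's.
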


\begin{proof}
Because $\dot{b} =$ $\sin\theta - 1$, it is always true that $\dot{b} \leq 0$. Hence, for $t > t_{0}$, $b < -3$, and $G(b) = 0$. We can then rewrite the ODEs in (\ref{eq:sys2}) as (\ref{eq:sys2norad}). $\theta(0) = 0$ is a trivial solution and satisfies $\theta(t_{0}) = 0$. Since such a solution is unique, (a) follows.

We will now show (b). Because $\theta(t_{0}) > 0$ and $\dot{\theta} > 0$ if $\theta > 0$, it follows that $\theta(t_{1}) > 0$ for $t_{1} > t_{0}$. Hence, $\sin\theta \neq 0$, and we can separate the second equation of (\ref{eq:sys2norad}):
\begin{equation}
\frac{1}{\sin\theta} d\theta = \frac{a^2}{2|m|} dt. \\
\end{equation}
Integrating, we have:
\begin{equation}
-\ln|\csc(\theta) + \cot(\theta)| = \frac{a^2}{2|m|}t + C_{0} \\
\end{equation}
or
\begin{equation}
\csc(\theta) + \cot(\theta) = C_{1}e^{-\frac{a^2}{2|m|}t}.
\end{equation}

We get rid of the absolute value by choosing the sign for $C_{1}$. Here, $C_{1} > 0$ since $\theta(t_{0}) > 0$ implies $\csc(\theta) + \cot(\theta) > 0$. Now, for $0 \leq \theta \leq \pi$, $0 \leq \csc(\theta) + \cot(\theta) < \infty$, is monotonously decreasing, and is invertible. Let $\Theta: [0, \pi] \longrightarrow [0, \infty)$ be the inverse of $\csc(\theta) + \cot(\theta)$. We can now write the solution explicitly as: 

\begin{equation}
\theta(t) = \Theta(C_{1}e^{-\frac{a^2}{2|m|}t}).
\end{equation}

The initial conditions tell us:
\begin{equation}
C_{1} = \frac{\csc(\theta(t_{0})) + \cot(\theta(t_{0}))}{e^{-\frac{a^2}{2|m|}t_{0}}} = \frac{C_{2}}{e^{-\frac{a^2}{2|m|}t_{0}}}.
\end{equation}
where $C_{2} = \csc(\theta(t_{0})) + \cot(\theta(t_{0}))$. Because $0 \leq \theta(t_{0}) < \frac{\pi}{2}$, $C_{2} > 1$. Substituting $C_{1}$ gives us: 
\begin{equation}
\theta(t) = \Theta(C_{2}e^{-\frac{a^2}{2|m|}(t - t_{0})}).
\end{equation}
Consider $t_{1} = -\frac{2|m|}{a^2}\ln(\frac{1}{C_{2}}) + t_{0} > t_{0}$. Using the fact that $\Theta(1) = \frac{\pi}{2}$ shows us that $\theta(t_{1}) = \frac{\pi}{2}$.

For part (c), repeat the proof for part (b), except $C_{1} < 0$, and we define $\Theta: [-\pi, 0] \longrightarrow (-\infty, 0]$ instead.
\end{proof}
\hspace{0cm}

\begin{figure}[htbp]
\centerline{\includegraphics[scale=.24]{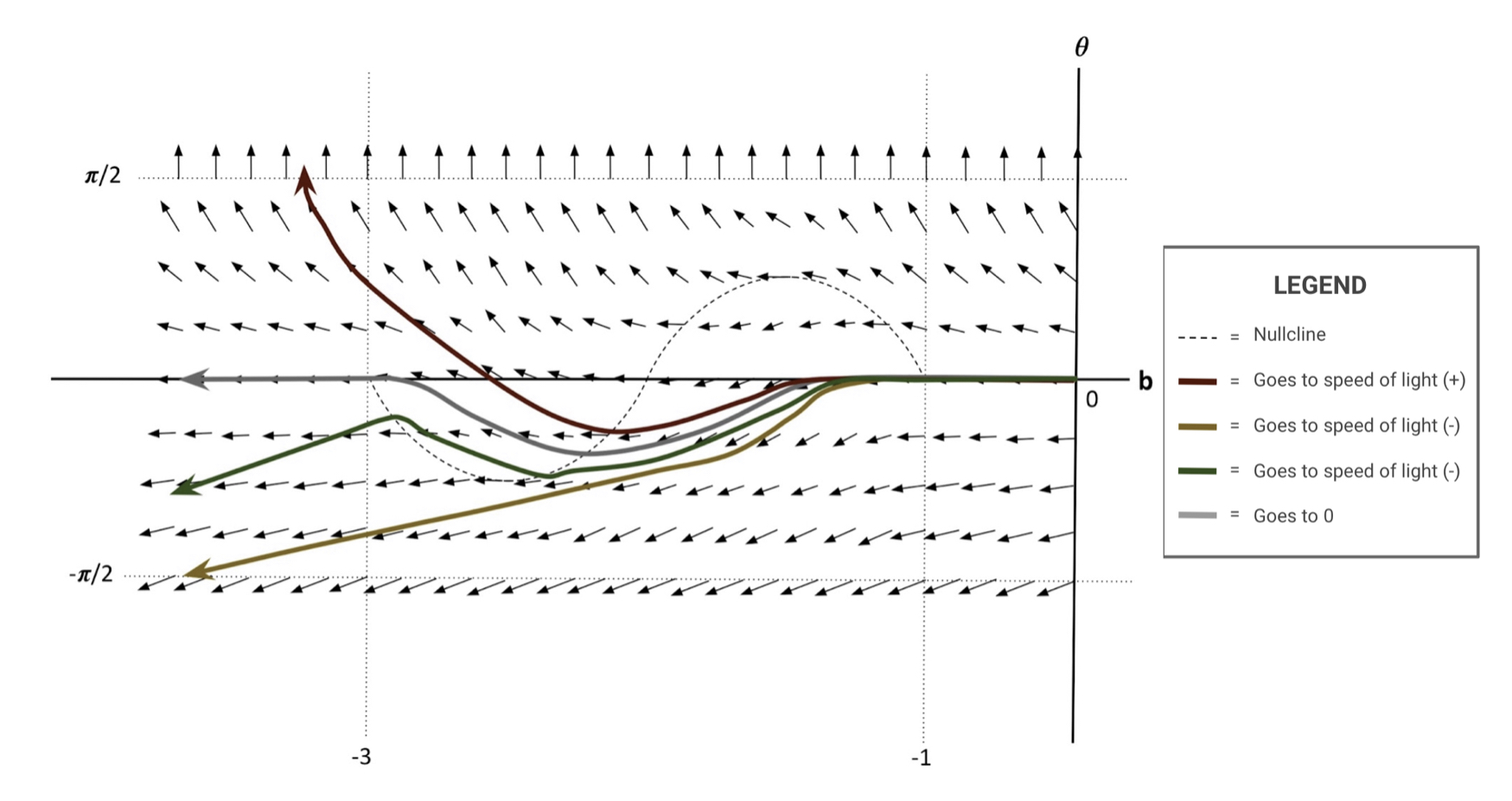}}
\caption{Hypothetical solutions to the system of ODEs}
\label{fig4}
\end{figure}

We show instability by proving the existence of a solution to (\ref{eq:sys2}) which satisfies the conditions of (c). To do this, we first work with the backward-flow defined in (\ref{eq:rev}) and make a change of variables.

\hspace{0cm}

\begin{proposition}

Assume $G(x) = \beta \sin(\pi x)\chi_{[-3,-1]}$. There exists an $\epsilon >0$ such that $0 < \beta < \epsilon$ implies that the system of ODEs in (\ref{eq:sys2}) with initial conditions at $t_{1}$:
\begin{equation}\label{eq:prop7init}
 \begin{cases} 
      b(t_{1}) = -1\\
      \theta(t_{1}) = 0\\
   \end{cases}
\end{equation}
has a unique solution $(b(t), \theta(t))$ such that at some $t_{2} > t_{1}$:
\begin{equation}
 \begin{cases} 
      b(t_{2}) = -3\\
      \theta(t_{2}) < 0. \\
   \end{cases}
\end{equation}
\end{proposition}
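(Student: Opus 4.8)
The plan is to treat this as a regular perturbation problem in the amplitude $\beta$, with $\beta=0$ (the static solution $\theta\equiv 0$ of \eqref{eq:static}) as the base point, and to read off the sign of $\theta$ at $b=-3$ from the first-order term in $\beta$. The whole point of the prescribed force $G_\beta=\beta\sin(\pi x)\chi_{[-3,-1]}$ will turn out to be that it makes this first-order term nonzero with a definite sign.

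First I would reparametrize the trajectory by $b$ instead of $t$. Since $\dot b=\sin\theta-1\le 0$ and $\dot b$ is bounded strictly away from $0$ as long as $\theta$ stays away from $\pi/2$, the map $t\mapsto b(t)$ is a legitimate monotone change of variable; this is exactly the role of passing to the reverse flow \eqref{eq:rev}, which renders $b$ increasing. Dividing the two equations of \eqref{eq:sys2} collapses the system to the single scalar ODE
\begin{equation*}
\frac{d\theta}{db}=H(b,\theta,\beta):=\frac{-\tfrac{a}{2|m|}\,\beta\sin(\pi b)\cos^2\theta+\tfrac{a^2}{2|m|}\sin\theta}{\sin\theta-1},\qquad \theta(-1)=0,
\end{equation*}
integrated from $b=-1$ down to $b=-3$. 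Two preliminary facts, both from standard ODE theory applied to the Lipschitz right-hand side of \eqref{eq:sys2}, are needed: (i) existence and uniqueness of the solution by Picard--Lindel\"of, which already settles the uniqueness claim; and (ii) for $\beta$ small the solution stays uniformly $O(\beta)$-close to $\theta\equiv 0$ on $b\in[-3,-1]$, by continuous dependence on $\beta$ together with a Gr\"onwall estimate (the $\beta$-forcing is $O(\beta)$ and the homogeneous part $\tfrac{a^2}{2|m|}\sin\theta$ vanishes at $\theta=0$). Fact (ii) guarantees both that $\sin\theta-1$ never vanishes, so the $b$-reparametrization is valid, and that the trajectory genuinely reaches $b=-3$.

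The crux is the first variation. Setting $\psi(b):=\partial_\beta\theta(b;\beta)\big|_{\beta=0}$, smooth dependence on the parameter yields the linear variational equation obtained by linearizing $H$ about $(\theta,\beta)=(0,0)$:
\begin{equation*}
\psi'(b)=-\frac{a^2}{2|m|}\,\psi(b)+\frac{a}{2|m|}\sin(\pi b),\qquad \psi(-1)=0.
\end{equation*}
This constant-coefficient linear ODE is solved with the integrating factor $e^{\kappa b}$, where $\kappa:=a^2/(2|m|)$. Evaluating at $b=-3$ and using that $\sin(\pi b)$ vanishes at both endpoints $b=-1,-3$ (so only the $\cos(\pi b)=-1$ boundary terms survive), the integral is explicit and gives
\begin{equation*}
\psi(-3)=-\frac{a}{2|m|}\cdot\frac{\pi\,(e^{2\kappa}-1)}{\kappa^2+\pi^2}<0\quad\text{for }a>0.
\end{equation*}
Hence $\theta(-3;\beta)=\beta\,\psi(-3)+o(\beta)$, and choosing $\epsilon$ small enough that the $o(\beta)$ remainder is dominated by the linear term, every $0<\beta<\epsilon$ yields $\theta(t_2)=\theta(-3;\beta)<0$, as required. (For $a<0$ the sign of $\psi(-3)$ flips, so one would instead take $\beta<0$; the paper's sign convention corresponds to $a>0$.)

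The main obstacle is the rigorous perturbative bookkeeping rather than any single delicate estimate: one must (a) confirm that the solution stays $O(\beta)$-close to $\theta\equiv 0$ across the entire interval, so that the reparametrization is valid and $b=-3$ is actually attained — a Gr\"onwall argument suffices — and (b) justify differentiability in $\beta$ to legitimize the variational equation. The resulting integral is then routine, but it deserves emphasis that the choice $G_\beta=\beta\sin(\pi x)\chi_{[-3,-1]}$ is engineered so that $\sin(\pi b)$ vanishes precisely at $b=-1$ and $b=-3$; this makes the two boundary contributions reinforce rather than cancel, producing the definite negative sign of $\psi(-3)$ that drives the instability. Combined with the trichotomy of the preceding Proposition, $\theta(t_2)<0$ then forces $\theta\to-\pi/2$, i.e. the particle reaches the speed of light.
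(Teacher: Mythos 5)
Your proposal is correct and follows essentially the same strategy as the paper: differentiate the solution with respect to the amplitude $\beta$ at the static solution, solve the resulting linear variational equation explicitly with an integrating factor, and read off the sign of the first-order term (your value of $\psi(-3)$ is consistent with the paper's $Z(1,0)$, with the sign flipped because you track the forward rather than the reverse flow). The only difference is that the paper runs the computation on the backward flow from $(b,\theta)=(-3,0)$ and then invokes uniqueness/non-crossing of trajectories to transfer the conclusion to the forward solution, whereas you linearize the forward flow directly, which is slightly more economical and, by explicitly including the Gr\"onwall step guaranteeing the trajectory reaches $b=-3$, a bit more complete than the paper's own write-up.
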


\begin{proof}
To start, consider the backward flow (\ref{eq:rev}) with initial conditions
\begin{equation}
 \begin{cases} 
      b(t_{1}) = -3\\
      \theta(t_{1}) = 0\\
   \end{cases}
\end{equation}
and make the following change of variables:
\begin{equation}
 \begin{cases} 
      y = \theta\\
      x = b + 2.
   \end{cases}
\end{equation}
Using the fact that $\frac{dy}{dx} = \frac{\dot{\theta}}{\dot{b}}$ gives us the following ODE:
\begin{equation}
 \begin{cases} 
    \frac{dy_{\beta}}{dx} = \frac{a}{2|m|}(1 + \sin(y_{\beta}))(\beta \sin(\pi x) - a \sec(y_{\beta})\tan(y_{\beta})) \\
    y_{\beta}(-1) = 0.
 \end{cases}
\end{equation}

By Lemma \ref{lem:1} below, there exists an $\epsilon > 0$ such that $0 < \beta < \epsilon$ implies $y_{\beta}(1) > 0$. Because of the uniqueness of solutions for first-order ODEs, a $\beta$ satisfying the previous statement implies that the solution for the forward-flow with initial conditions specified in (\ref{eq:prop7init}) ends up below $0$. This can be intuitively seen in Figure \ref{fig5}. 
\end{proof}

\hspace{0cm}

\begin{figure}[htbp]
\centerline{\includegraphics[scale=.18]{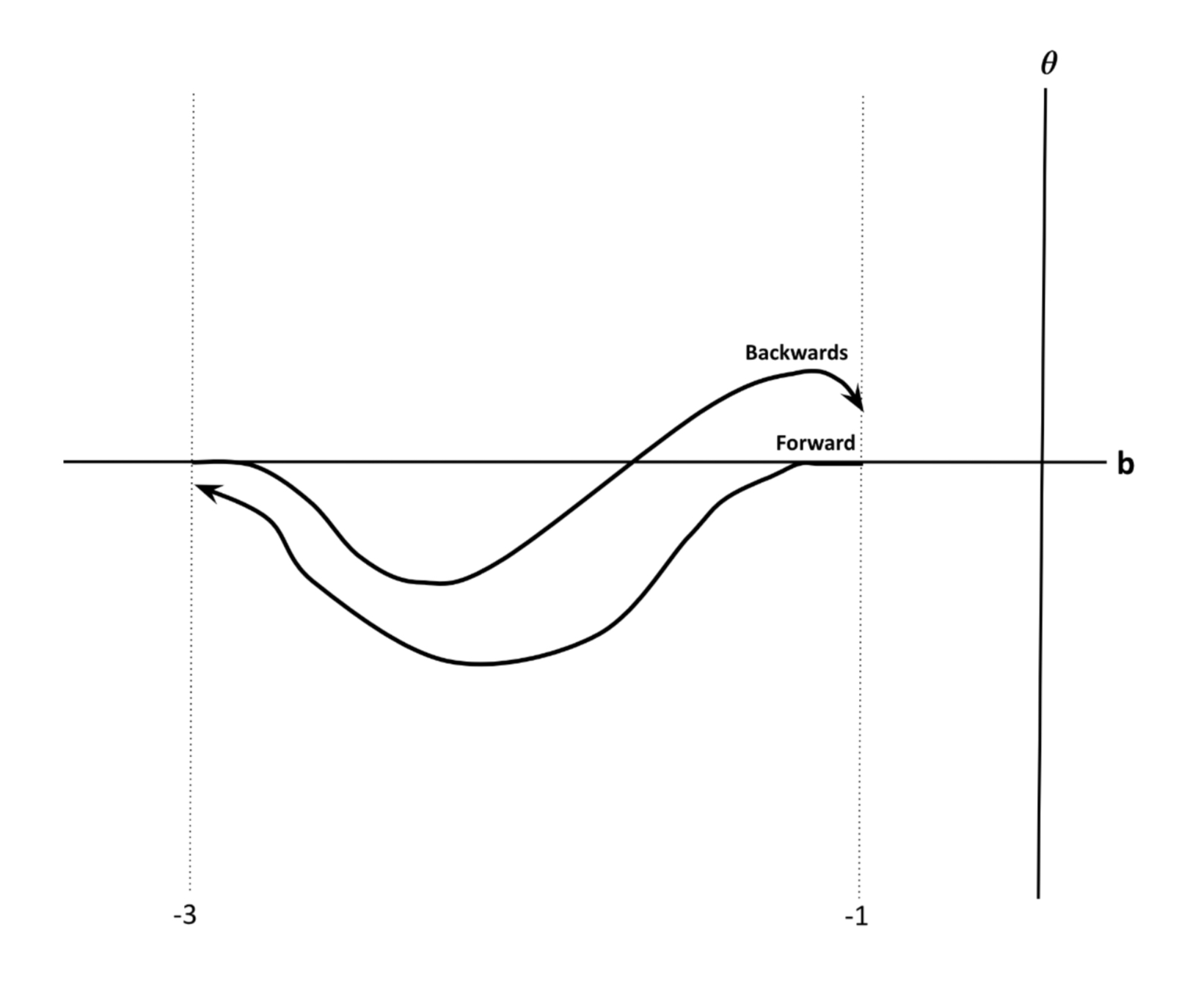}}
\caption{Relation between forward and backward solution}
\label{fig5}
\end{figure}

\begin{lemma}\label{lem:1}
Assume that $a> 0$. Suppose $y_{\beta}: \mathbb{R} \longrightarrow \mathbb{R}$ is a function satisfying: 
\begin{equation}\label{lemma1}
 \begin{cases} 
    \frac{dy_{\beta}}{dx} = \frac{a}{2|m|}(1 + \sin(y_{\beta}))(\beta \sin(\pi x) - a \sec(y_{\beta})\tan(y_{\beta})) \\
    y_{\beta}(-1) = 0.
 \end{cases}
\end{equation}
There exists an $\epsilon > 0$ such that $0 < \beta < \epsilon$ implies $y_{\beta}(1) > 0$.
\end{lemma}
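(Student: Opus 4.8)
The plan is to treat $\beta$ as a small perturbation parameter about the value $\beta=0$, for which the solution is explicit, and then read off the sign of $y_\beta(1)$ from the first-order term. Indeed, when $\beta=0$ the right-hand side of \eqref{lemma1} vanishes identically at $y=0$ (because $\sec(0)\tan(0)=0$), so $y_0\equiv 0$ is the unique solution with $y_0(-1)=0$. Since $y_0$ is bounded away from the singularities of $\sec$ and $\tan$ at $y=\pm\tfrac{\pi}{2}$, and the vector field
\[
R(x,y,\beta)=\frac{a}{2|m|}(1+\sin y)\bigl(\beta\sin(\pi x)-a\sec y\tan y\bigr)
\]
is jointly smooth in a neighborhood of the compact set $\{(x,0,0):x\in[-1,1]\}$, the standard theorem on smooth dependence of solutions on parameters guarantees that for all sufficiently small $\beta$ the solution $y_\beta$ exists on $[-1,1]$, stays in a fixed compact subinterval of $(-\tfrac{\pi}{2},\tfrac{\pi}{2})$, and is differentiable in $\beta$ with $y_\beta(x)=\beta\,w(x)+o(\beta)$ as $\beta\to 0$, uniformly in $x\in[-1,1]$, where $w(x):=\partial_\beta y_\beta(x)\big|_{\beta=0}$.

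Next I would derive the linear variational equation for $w$ by differentiating \eqref{lemma1} in $\beta$ and evaluating at $\beta=0$, $y=y_0\equiv 0$. Writing $c:=\frac{a^2}{2|m|}>0$ and $k:=\frac{a}{2|m|}>0$, a direct computation of the partials at $(x,0,0)$ gives $\partial_\beta R=k\sin(\pi x)$ and $\partial_y R=-c$ (using $\frac{d}{dy}(\sec y\tan y)\big|_{y=0}=1$), so that
\[
w'(x)=-c\,w(x)+k\sin(\pi x),\qquad w(-1)=0,
\]
the initial condition being $0$ because $y_\beta(-1)=0$ for every $\beta$. This scalar linear problem is solved with the integrating factor $e^{cx}$, and evaluating at $x=1$ gives
\[
w(1)=k\,e^{-c}\int_{-1}^{1}e^{cs}\sin(\pi s)\,ds=k\,e^{-c}\,\frac{2\pi\sinh c}{c^{2}+\pi^{2}}.
\]
Since $k>0$, $c>0$, and $\sinh c>0$, we conclude $w(1)>0$, whence $y_\beta(1)=\beta\,w(1)+o(\beta)>0$ for all sufficiently small $\beta>0$; this yields the desired $\epsilon$. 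The positivity has a transparent origin: the forcing $\sin(\pi x)$ is odd about $x=0$, negative on $(-1,0)$ and positive on $(0,1)$, but the exponential weight $e^{cs}$ emphasizes the later, positive lobe, producing a net positive displacement at $x=1$.

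The main obstacle is not the final computation but the justification that the first-order expansion in $\beta$ is legitimate with a genuine $o(\beta)$ remainder: one must verify that $y_\beta$ remains in the region where $\sec$ and $\tan$ are smooth (bounded away from $\pm\tfrac{\pi}{2}$) throughout $[-1,1]$ for all small $\beta$, so that the smooth-dependence theorem applies on a fixed compact domain. This follows from continuous dependence on parameters together with $y_0\equiv 0$, but it must be stated carefully, since the vector field is singular at $y=\pm\tfrac{\pi}{2}$ and a careless estimate could in principle let the $o(\beta)$ error swamp the leading term. Once this control is in place, the remainder of the argument is routine linear ODE theory.
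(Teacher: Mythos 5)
Your proposal is correct and follows essentially the same route as the paper: both linearize in $\beta$ about the static solution $y_0\equiv 0$, derive the variational equation $Z'=-\tfrac{a^2}{2|m|}Z+\tfrac{a}{2|m|}\sin(\pi x)$ with $Z(-1)=0$, solve it by an integrating factor, and verify $Z(1,0)>0$ (your closed form $k e^{-c}\tfrac{2\pi\sinh c}{c^2+\pi^2}$ agrees with the paper's $\tfrac{2\pi a|m|(1-e^{-a^2/|m|})}{4\pi^2m^2+a^4}$). Your added care about smooth dependence on parameters and keeping $y_\beta$ away from $\pm\tfrac{\pi}{2}$ is a point the paper leaves implicit, but it does not change the argument.
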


\newpage{}

\begin{proof}
Consider $y(x, \beta) = y_{\beta}(x)$ and let $Z = \frac{\partial y}{\partial \beta}$. We can rewrite (\ref{lemma1}) as:

\begin{equation}
 \begin{cases} 
    \frac{\partial y}{\partial x} = \frac{a}{2|m|}(1 + \sin(y))(\beta \sin(\pi x) - a \sec(y)\tan(y)) \\
    y(-1, \beta) = 0 \\
    Z(-1, \beta) = 0.
 \end{cases}
\end{equation}

Because $y(1, 0) = 0$, it suffices to show that $Z(1,0) = \frac{\partial y}{\partial \beta}(1,0) > 0$. Using the fact that $\frac{\partial Z}{\partial x} = \frac{\partial^2 y}{\partial x \partial \beta} = \frac{\partial^2 y}{\partial \beta \partial x}$, and substituting $\beta = 0$ (meaning $y = 0$), we arrive at the following linear differential equation for $Z$:
\begin{equation}
 \begin{cases} 
    \frac{\partial Z}{\partial x}(x,0) = \frac{a}{2|m|}\sin(\pi x) - \frac{a^2}{2|m|}Z \\
    Z(-1,0) = 0.
 \end{cases}
\end{equation}

The solution to this is simply:
\begin{equation*}
Z(x,0) = \frac{a}{2|m|} e^{-\frac{a^2 x}{2|m|}} \int_{-1}^{x} \sin(\pi t)e^{\frac{a^2 t}{2|m|}} dt.
\end{equation*}
We have:
\begin{equation}
Z(1,0) = \frac{a}{2|m|} e^{-\frac{a^2}{2|m|}} \int_{-1}^{1} \sin(\pi t)e^{\frac{a^2 t}{2|m|}} dt = \frac{2\pi a |m| (1 - e^{-\frac{a^2}{|m|}})}{4\pi^2 m^2 + a^4}.
\end{equation}
Because $a$ is assumed to be positive, we have $Z(1,0) > 0$ as needed.

\end{proof}

\section{Summary and Outlook}
We have shown that the static solution to this problem, where the particle remains at rest forever, is stable for particles with positive bare mass. However, for particles with negative bare mass, the static solution is highly unstable. That is, a small amount of radiation can cause the particle to accelerate to the speed of light in finite time. Though this result is not intuitive, it is also not very surprising when considering the model we used. In the initial conditions for the wave equation, we took:

\begin{equation}
u(0,s) = -\frac{a}{2}|s| + V_{0}(s).
\end{equation}
Recall that the field energy density is $\epsilon(t,s) =T^{\mbox{\tiny field}}_{00} =  \frac{1}{2}(u_t^2+u_s^2)$. 
Therefore this initial condition has an infinite amount of energy: 
\begin{equation}
\int_{-\infty}^{\infty} \epsilon(s, 0) ds = \int_{-\infty}^{\infty} \frac{a^2}{4} + \dot{V}_0^2
 \ ds = \infty.
\end{equation}
Since the total energy of the system is conserved, there is an infinite amount of energy available that can be transferred to the particle, allowing it to accelerate to the speed of light.  But another reason this can happen in finite time is that in this model the Kiessling force $f(t,q,\dot{q})$ itself diverges as $|\dot{q}| \to 1$.  

Hence, looking forward, we would like to examine different models for the joint evolution, in which such problems are not present. In one such model, the scalar field would be governed by the Klein-Gordon equation rather than the wave equation:
\begin{equation}
    \begin{cases} 
      \partial_{t}^2 u - \partial_{s}^2 u + \mu^2u^2= a\delta(s - q(t)) \\
      u(0, s) = \frac{a}{2\mu}e^{-\mu \lvert s\rvert} + v_0(s)\\
      \partial_{t}u(0, s) = v_1(s). \\
    \end{cases}
\end{equation}
This would add a mass term to the field equations and change the part of the initial conditions that represents the static solution. In this model, the particle would start with a finite amount of energy. It would be interesting to see if a particle with negative bare mass still accelerates to the speed of light. We are currently investigating this.

Another model to consider is one in which the field equations are {\em fully} relativistic.  The wave operator appearing in (\ref{eq:wave}) is of course relativistic, but  the delta source on the right-hand side of the equation is not manifestly so.  It turns out that it is possible to modify this right-hand side so that the equation itself becomes fully relativistic. It is possible to show that for this modified equation for a massless scalar field, the Kiessling force will not diverge if the particle velocity reaches the speed of light, and stability of the static solution is restored.  This result will appear elsewhere \cite{FroLeiTah}.

Additionally, we would like to explore what would happen with two particles instead of one. Mathematically, this would involve the sum of two Dirac delta functions as the source of the wave equation. This may necessitate the use of differential-delay equations rather than simple ODEs, which would require much more intricate analysis.

\begin{section}{Acknowledgements}
We thank Vu Hoang and Maria Radosz for helping us correct the formula for the particle self-force, and Lawrence Frolov for reading through the paper and providing helpful comments.  We are grateful to the anonymous referee for many helpful suggestions and comments.

\end{section}

\providecommand{\bysame}{\leavevmode\hbox to3em{\hrulefill}\thinspace}
\providecommand{\MR}{\relax\ifhmode\unskip\space\fi MR }
% \MRhref is called by the amsart/book/proc definition of \MR.
\providecommand{\MRhref}[2]{%
  \href{http://www.ams.org/mathscinet-getitem?mr=#1}{#2}
}
\providecommand{\href}[2]{#2}

\end{document}